\begin{document}

\newtheorem{definition}{Definition}
\newtheorem{lemma}{Lemma}
\newtheorem{theorem}{Theorem}
\newtheorem{example}{Example}
\newtheorem{proposition}{Proposition}
\newtheorem{remark}{Remark}
\newtheorem{assumption}{Assumption}
\newtheorem{corrolary}{Corrolary}
\newtheorem{property}{Property}
\newtheorem{ex}{EX}
\newtheorem{problem}{Problem}
\newcommand{\argmin}{\arg\!\min}
\newcommand{\argmax}{\arg\!\max}
\newcommand{\st}{\text{s.t.}}
\newcommand \dd[1]  { \,\textrm d{#1}  }

\newcounter{mytempeqncnt}

\title{Swarm-STL: A Framework for Motion Planning\\ in Large-Scale, Multi-Swarm Systems}  

\author{Shiyu Cheng, \IEEEmembership{Graduate Student Member, IEEE}, Luyao Niu, \IEEEmembership{Member, IEEE}, Bhaskar Ramasubramanian, \IEEEmembership{Member, IEEE}, Andrew Clark, \IEEEmembership{Senior Member, IEEE}, and Radha Poovendran, \IEEEmembership{Fellow, IEEE}
\thanks{S. Cheng and A. Clark are with the Department of Electrical and Systems Engineering, Washington University in St. Louis, St. Louis, MO, USA. Email: \{cheng.shiyu, andrewclark\}@wustl.edu}
\thanks{L. Niu is with the Department of Electrical and Computer Engineering, University of Washington, Seattle, WA, USA. Email: luyaoniu@uw.edu}
\thanks{B. Ramasubramanian is with the Electrical and Computer Engineering Program, Western Washington University,
Bellingham, WA, USA. Email: ramasub@wwu.edu}
\thanks{R. Poovendran is with the Network Security Lab, Department of Electrical and Computer Engineering, University of Washington, Seattle, WA, USA. Email: rp3@uw.edu}
}

\maketitle
\begin{abstract}
In multi-agent systems, signal temporal logic (STL) is widely used for path planning to accomplish complex objectives with formal safety guarantees. However, as the number of agents increases, existing approaches encounter significant computational challenges. Recognizing that many complex tasks require cooperation among multiple agents, we propose swarm STL specifications to describe the collective tasks that need to be achieved by a team of agents.
Next, we address the motion planning problem for all the agents in two stages. First, we abstract a group of cooperating agents as a swarm and construct a reduced-dimension state space whose dimension does not increase with the number of agents. 
The path planning is performed at the swarm level, ensuring the safety and swarm STL specifications are satisfied. Then, we design low-level control strategies for agents within each swarm based on the path synthesized in the first step. The trajectories of agents generated by the two-step policy ensure satisfaction of the STL specifications. We evaluate our two-stage approach in both single-swarm and multi-swarm scenarios. The results demonstrate that all tasks are completed with safety guarantees. Compared to the baseline multi-agent planning approach, our method maintains computational efficiency as the number of agents increases, since the computational time scales with the number of swarms rather than the number of agents.
\end{abstract}

\begin{IEEEkeywords}
Path planning, signal temporal logic, convex optimization
\end{IEEEkeywords}

\section{Introduction}

Multi-agent autonomous systems, such as search and rescue robots, are often required to cooperate to accomplish complex and time-sensitive tasks. 
For example, a specification may require multiple agents to simultaneously occupy a set of desired regions in the environment.
Such specifications can be formalized using signal temporal logic (STL) \cite{maler2004monitoring}.
A naive approach to synthesize control inputs to ensure autonomous systems satisfy the specifications is to treat the multi-agent systems as a ``meta-agent" whose state is the concatenation of all agents' states, and synthesize the control input for the meta-agent.
Existing solutions include discretization-based \cite{kress2018synthesis, fainekos2009temporal}, barrier function-based \cite{lindemann2018control}, and model predictive control (MPC)-based \cite{raman2014model} approaches.

However, as the number of agents increases, the dimension of the meta-agent states increases significantly, making existing correct-by-construction control synthesis \cite{buyukkocak2021planning} to satisfy temporal logic specifications intractable.
To address the scalability challenge, several approaches have been developed \cite{chen2020guaranteed, fu2020distributed, lindemann2020barrier}. 
However, these approaches assume that there is a central entity to decompose the temporal logic specification into a set of tasks, and assign the tasks to agents.
Due to dependencies among the tasks and couplings between the agents, decomposing specification and assigning tasks in large-scale multi-agent systems operating in complex environments are often computationally expensive and may not always be feasible.
Consequently, the aforementioned approaches may not be applicable to large-scale multi-agent systems.

In this paper, we consider a large-scale multi-agent system subject to an STL specification. 
The STL specification requires the agents to satisfy a set of desired properties without imposing constraints based on identities of agents. Our objective is to synthesize control inputs for the agents to meet the STL specification. Considering that each task requires a specified number of agents to be completed, we first propose swarm STL specifications to describe the objectives that all agents within a swarm are required to achieve.
Our key insight to overcome the scalability challenge is to synthesize control inputs on a dimension-reduced state space, where we collectively control swarms of agents. 
We characterize the swarm behaviors by deriving a minimum volume ellipsoid bounding all agents in a reduced-dimension state space. 
Given the ellipsoids, the problem of satisfying the STL specification is converted to the problem of collectively controlling agents by configuring the centroids and shapes of bounding ellipsoids. Hence, the dimensionality of the problem depends on the number of swarms rather than the number of agents.

We develop an iterative algorithm to generate a sequence of waypoints with timestamps for the ellipsoids.
We show that our synthesized waypoints can always be reached by the swarms.
Furthermore, if the swarms can reach the synthesized waypoints within desired time intervals, then the original STL specification is satisfied.
We then derive a decentralized control input for each agent within the swarm.
We demonstrate our proposed framework using multiple state-of-the-art benchmarks.
To summarize, this paper makes the following contributions:
\begin{itemize}
    \item We develop a provably correct procedure to convert an STL specification to an equivalent swarm STL.
    \item We formulate an optimization program to synthesize the waypoints, timestamps, and bounding ellipsoids for all swarms. We formally verify that the synthesized waypoints always result in feasible control inputs for all agents, ensuring satisfaction of the swarm temporal logic specification.
    \item We synthesize a collective control for each swarm to reach the desired waypoints. We compute the control input for each agent based on the collective control.
    \item We validate our proposed approach across several benchmarks, demonstrating its effectiveness. 
   We also demonstrate that our approach is significantly more efficient than the baseline in the multi-agent scenario when the number of agents is at least six.
\end{itemize}

The remainder of this paper is organized as follows.
Section \ref{sec:related} reviews related literature.
Section \ref{sec:prelim} presents preliminaries on STL specifications.
We discuss the system model and present problem formulation in Section \ref{sec:formulation}. Section \ref{sec:solution} presents our proposed methods for swarm-level planning and low-level control at the agent level. Case studies are presented in Section \ref{sec:simulation}. In Section \ref{sec: conclusion}, we conclude the paper.

\section{Related Work}\label{sec:related}

Control synthesis under temporal logic constraints has been extensively studied \cite{kress2018synthesis, tellex2020robots, fan2020fast, raman2014model, lindemann2018control, fainekos2009temporal}.
The authors of \cite{kress2018synthesis, fainekos2009temporal} employ discretization-based approaches to partition the state space and synthesize control inputs to meet temporal logic constraints.
Model predictive control (MPC)-based approaches \cite{raman2014model, farahani2015robust,sadraddini2015robust, wongpiromsarn2012receding} have also been widely used to satisfy temporal logic constraints. 
These approaches discretizes the time horizon and computes a control input at each discrete time interval. 
Instead of solving a large optimization program as per MPC-based approaches, barrier function-based approaches \cite{lindemann2018control,zehfroosh2022non} formulate a quadratic program at each time interval to synthesize control inputs. 

Discretization-based \cite{buyukkocak2021planning, kloetzer2011multi}, MPC-based \cite{sahin2019multirobot,jones2019scalable}, and barrier function-based \cite{chen2020guaranteed, fu2020distributed, lindemann2020barrier} approaches have been extended to synthesize control inputs for multi-agent systems under temporal logic constraints. 
Different from the existing work, we do not require a central entity to decompose the STL specification and assign tasks to agents based on their identities. 
Instead, we focus on swarms in a dimension-reduced state space and develop a scalable approach to automatically synthesize control inputs for all agents. A centroid with a bounding ellipsoid is used for path planning at the swarm level. A related concept appears in stochastic control, where the probability distribution of the system state is analyzed and manipulated \cite{chen2015optimal}\cite{zhang2025distributional}.
In addition, unlike the MPC-based and barrier function-based approaches, which may not always guarantee the existence of feasible control inputs to satisfy the STL specification, we formally prove the feasibility of control inputs to satisfy the STL specification.
\begin{table}[h]
    \centering
    \begin{tabular}{|c|l|}
        \hline
        \textbf{Symbol} & \textbf{Description} \\
        \hline
        $x_i$ & State of agent $i$ \\
        $p_i$ & Position of agent $i$\\
        $v_i$ & Velocity of agent $i$ \\
        $u_i$ & Control input of agent $i$ \\
        $A, B$ & System matrices defining agent dynamics \\
        $\bar{x}_s$ & Centroid of swarm $s$ \\
        $\psi$ & Swarm signal temporal logic (STL) specification \\
        $\pi$ & Safety specification \\
        $\mu$ & Atomic predicate in STL \\
        $\mathcal{S}$ & Set of swarms \\
        $\mathcal{N}$ & Set of all agents \\
         $\underline{\mathcal{S}}(N_\mu)$ & Set of swarms with fewer than $N_\mu$ agents\\
          $\overline{\mathcal{S}}(N_\mu)$& Set of swarms with at least $N_\mu$ agents\\
        $\Sigma_s$ & Bounding ellipsoid matrix for swarm $s$ \\
        $t_{s,k}$ & Timestamp for segment $k$ of swarm $s$ \\
        $\eta$ & Maximum tracking error of centroids\\
        $\zeta$ & Minimal safety distance between agents\\
        $\epsilon$ & Feasible margin \\
        $\chi$ & Actuation-induced motion bound\\
        $\xi$ & Volume inflation factor for bounding ellipsoid\\
        $\Gamma_{s,k}$ & $\eta$-tube around the reference path for segment $k$ of swarm $s$  \\
        \hline
    \end{tabular}
    \caption{Notation Table}
\end{table}

\section{Preliminaries}\label{sec:prelim}

This section presents preliminaries on STL specifications. An STL specification \cite{maler2004monitoring} is defined over a signal $x(t): \mathbb{R}_{\geq 0}\rightarrow\mathbb{R}^n$  as follows:
\begin{equation*}
    \phi = \text{True} | \mu | \neg\phi | \phi_1\land\phi_2 | \phi_1 U_{[a,b]}\phi_2,
\end{equation*}
where $\phi_1$ and $\phi_2$ are STL specifications.
In the above, $\mu$ is an atomic predicate that can be evaluated based on the value of a predicate function $h^{\mu}:\mathbb{R}^n\rightarrow \mathbb{R}$. The definition of $h^{\mu}$ is as follows:
\begin{equation*}
    \mu = \begin{cases}
        \text{True}&\mbox{ if }h^{\mu}(x)\geq 0\\
        \text{False}&\mbox{ if }h^{\mu}(x)< 0
    \end{cases}
    \end{equation*}
    We define $U_{[a,b]}$ as the until operator, where $a,b\in\mathbb{R}_{\geq 0}$ with $a\leq b$.
An STL specification can be expressed in negation normal form (NNF), where negations are pushed inward to only appear directly in front of atomic predicates.
We indicate that a signal $x(t)$ satisfies an STL specification $\phi$ at time $t$ as $(x,t)\models\phi$. 
Other Boolean operators such as disjunction $\lor$ and temporal operators including eventually ($F_{[a,b]}$) and always ($G_{[a,b]}$) can be defined inductively.
The semantics of STL \cite{lindemann2018control} are given below.

\begin{definition}[Semantics of STL]
    Consider a signal $x: \mathbb{R}_{\geq 0}\rightarrow \mathbb{R}^n$, the semantics of STL are defined as 
    \begin{align*}
        (x, t)&\models\mu && \Leftrightarrow h^{\mu}(x(t))\geq 0\\
        (x, t)&\models\neg \phi && \Leftrightarrow \neg((x,t)\models\phi)\\
        (x, t)&\models \phi_1\wedge \phi_2 &&\Leftrightarrow (x, t)\models \phi_1\wedge (x, t)\models\phi_2\\
        (x, t)&\models \phi_1U_{[a,b]}\phi_2 &&\Leftrightarrow \exists t_{1} \in [t+a,t+b] \text{s.t. } (x,t_{1}) \models \phi_2\\
        &\qquad \qquad&&\quad\quad\wedge \forall t_2\in [t,t_1], (x,t_2)\models \phi_1\\
        (x, t)&\models F_{[a,b]}\phi && \Leftrightarrow \exists t_1\in [t+a, t+b] \  \text{s.t. } (x,t_1)\models \phi\\
        (x,t)&\models G_{[a,b]}\phi && \Leftrightarrow \forall t_1\in [t+a, t+b], (x,t_1)\models \phi
    \end{align*}
\end{definition}
\section{System Model and Problem Formulation}\label{sec:formulation}

In this section, we first model the multi-agent system. 
Next, we describe the STL specification that the agents need to satisfy.
We finally present the problem formulation.

\subsection{System Model}
We consider a collection of agents, denoted as $\mathcal{N}=\{1,\ldots,N\}$.
Each agent $i$ follows double integrator dynamics 
\begin{equation}
\label{eq:agent-dynamics}
    \dot{x}_i =\begin{bmatrix}
        \dot{p}_i\\ \dot{v}_i
    \end{bmatrix}= Ax_i + Bu_i,
\end{equation}
where $x_i\in\mathcal{X}\subset\mathbb{R}^{2d}$ is the state of agent $i$, $p_i\in \tilde{\mathcal{P}}\subset \mathbb{R}^d$ is the position, $v_i\in \mathcal{V}\subset \mathbb{R}^d$ is the velocity,  $u_i\in\mathcal{U}_i\subset\mathbb{R}^d$ is the control input, $d$ denotes the dimension of the position and velocity, and $A = \begin{bmatrix}
    0 & I\\ 0 & 0
\end{bmatrix}\in\mathbb{R}^{2d\times 2d}$, $B= \begin{bmatrix}
    0\\I
\end{bmatrix}\in\mathbb{R}^{2d\times d}$. 
Given an initial system state $x_i$ and the control input $u_i$, we denote the state of agent $i$ at any time $t\geq 0$ as $x_i(t)$.
We let $\mathbf{x}(t) = [x_1^T(t),\ldots, x_N^T(t)]^T$ for any time $t\geq 0$. 

The agents  form a set of swarms, denoted by $\mathcal{S}=\{1,\ldots,S\}$.
We denote the set of agents belonging to a swarm $s\in\mathcal{S}$ by $\mathcal{N}_s$.
We assume that each agent must belong to exactly one swarm.
That is, $\cup_{s\in\mathcal{S}}\mathcal{N}_s=\mathcal{N}$ and $\mathcal{N}_s\cap\mathcal{N}_{s'}=\emptyset$ for all $s,s'\in\mathcal{S}$ with $s \neq s^{\prime}$.

\subsection{Swarm STL Specifications}

In what follows, we first define safety specifications and swarm STL specifications.
Then, we describe the STL specifications given to the multi-agent system.
We finally present the problem formulation. 
\begin{definition}
\label{def:safe-STL}
    An STL specification $\pi$ is a safety specification if it can be written in the form $G_{[a,b]}\pi_{1} \wedge G_{[a,b]}\pi_{2}$, where 
    \begin{displaymath}
        \pi_{1} = \bigwedge_{1 \leq i < j \leq N}{\mu_{ij}}, \quad \pi_{2} = \bigwedge_{i=1}^{N}{\bigwedge_{j=1}^{M}{\theta_{ij}}}.
    \end{displaymath}
    In the above, $\{\mu_{ij}: i,j=1,\ldots,N, i< j\}$ is a set of atomic predicates with predicate functions 
    $h_{ij}^{\mu}(\mathbf{x}(t)) = \|p_{i}(t)-p_{j}(t)\|_{2} - \zeta$
    for the given $\zeta > 0$ and $\{\theta_{ij}: i=1,\ldots,N, j=1,\ldots,M\}$ is a set of atomic predicates with predicate functions $h_{ij}^{\theta}(\mathbf{x}(t)) = \gamma_{j}(x_{i}(t))$ for some continuous functions $\gamma_{1},\ldots,\gamma_{M}$.
\end{definition}
In Definition \ref{def:safe-STL}, $\pi_{1}$ defines a lower bound $\zeta$ on the distances between each pair of agents, while $\pi_{2}$ defines safety constraints (e.g., obstacle avoidance) that must be satisfied by each agent at each time step. 
In this paper, we consider functions $\gamma_{1},\ldots,\gamma_{M}$ that are of the linear form $\gamma_j(x_{i}(t)) = a_j^Tp_{i}(t) + b_j$, where $ j\in \{1, \ldots, M\}$ for $h_{ij}^{\theta}(\mathbf{x}(t))$. We adopt linear function representations to construct obstacles as convex polytopes.

We next define \emph{swarm STL specifications} to describe objectives that can be achieved by teams of agents. 
We extend the definition of predicates in STL as follows.
\begin{definition}\label{def:swarm predicate}
    A predicate $\mu$ is a swarm predicate if there exists a function $h^{\mu}: \mathbb{R}^{2d} \rightarrow \mathbb{R}$ and an integer $N_{\mu} \in \{1,\ldots,N\}$ such that 
    \begin{equation*}
    \mu = 
    \begin{cases}
    \emph{True}, & \mbox{\emph{if} } |\{i: h^{\mu}(x_{i}(t)) \geq 0\}| \geq N_{\mu} \\
    \emph{False}, & \emph{\text{else}}
    \end{cases}
    \end{equation*}
\end{definition}
Intuitively, a predicate is a swarm predicate if it holds when a particular number of agents $N_{\mu}$ satisfies a desired state constraint, without imposing additional limitations on \emph{which} agents satisfy the constraint. 
We consider linear predicate functions $h^{\mu}(x_{i}(t)) = F^Tp_{i}(t)+g$, where $F\in\mathbb{R}^{d}$ and $g\in \mathbb{R}$. A swarm STL specification is defined as $\psi$, in which each predicate is a swarm predicate given in Definition \ref{def:swarm predicate}.

Given the definitions of the safety and swarm STL specifications, the specification for the multi-agent system can be expressed as a conjunction between them, as follows:
\begin{equation}\label{eq:STL}
    \phi = \psi \land \pi,
\end{equation}
where $\psi$ is a swarm STL specification, and $\pi$ is a safety STL specification.
We denote the time horizon considered by $\phi$ as $[t_0,t_0+T]$.

We summarize the problem studied in this paper as follows.
\begin{problem}\label{prob:formulation}
    Consider a set of agents $\mathcal{N}$ who are given a mission modeled as Eq. \eqref{eq:STL}. Compute a control input $\{u_{i}(t): t \in [t_{0},t_{0}+T]\}$ for $i = 1, \ldots, N$, such that $(\mathbf{x},t) \models \phi$.
\end{problem}

\section{Proposed Solution}\label{sec:solution}

In this section, we describe our solution to Problem \ref{prob:formulation}.
We first convert specification $\phi$ to an equivalent expression in a reduced-dimension state space, describing the desired behaviors of swarms of agents.
We show that it suffices to satisfy the equivalent specification in the reduced-dimension state space to meet $\phi$. We develop algorithms for trajectory planning in the reduced-dimension space, and then propose low-level control laws for the agents to implement the planned trajectories. 
To simplify notation, we omit the explicit time dependence throughout the rest of the paper, unless otherwise stated.

\subsection{Reduced-Dimension State Space Construction}

We represent each swarm $s\in\mathcal{S}$ using a time-varying bounding ellipsoid enclosing all agents within the swarm. 
The centroid of the bounding ellipsoid for swarm $s$ is defined as 
\begin{equation}
    \overline{x}_{s} \triangleq \frac{1}{|\mathcal{N}_{s}|}\sum_{i\in \mathcal{N}_{s}}x_{i}.    
\end{equation}
We can describe the dynamics of $\overline{x}_{s}$ as 
\begin{equation}\label{eq:dynamics-centroid}
    \dot{\overline{x}}_{s} = \frac{1}{|\mathcal{N}_{s}|}\sum_{i\in \mathcal{N}_{s}}\dot{x}_{i}= A\overline{x}_{s} + B\overline{u}_{s},
\end{equation}
where $\overline{u}_{s} = \frac{1}{|\mathcal{N}_{s}|}\sum_{i\in \mathcal{N}_{s}}u_{i}$ is the collective control input of all agents within the swarm. The bounding ellipsoid is defined as 
\begin{multline}\label{eq:bounding ellipsoid}
    \overline{\Omega}(\overline{x}_{s},\Sigma_{s})\triangleq\{\begin{bmatrix}
        p\\v
    \end{bmatrix}\in\mathcal{X} :(p-\overline{p}_s)^T\Sigma_{s}^{-1}(p-\overline{p}_s)\leq 1,\\ p\in\mathbb{R}^d, v\in\mathbb{R}^d\},
\end{multline}
where $\Sigma_{s}\in\mathbb{S}_{++}^d$ is a positive definite matrix.

Using the centroid position $\bar{p}_s$ and bounding ellipsoid matrix $\Sigma_{s}$, we can characterize the behavior exhibited by each swarm $s$ in a reduced-dimension state space $(\bar{x}_s,\Sigma_{s})\in\mathbb{R}^{2d}\times \mathbb{S}_{++}^d$.
We remark that the dimensionality of this state space does not increase as the number of agents increases.
In the remainder of this section, we convert the specification $\phi$ in Eq. \eqref{eq:STL} to an equivalent expression that can be efficiently evaluated in the reduced-dimension state space.

\subsection{STL Specification in the Reduced-Dimension State Space}\label{sec:reduced STL}

We denote the set of swarms with fewer than $N_\mu$ and at least $N_\mu$ agents as $\underline{\mathcal{S}}(N_\mu)$ and $\overline{\mathcal{S}}(N_\mu)$, respectively. 
In what follows, we express the STL specification $\phi$ in the reduced-dimension state space. 

Without loss of generality, we assume that $\psi$ in Eqn. \eqref{eq:STL} is in NNF. For each predicate $\mu$ involved in $\psi$, 
we define a set of new predicates $\overline{\mu}^{1},\ldots,\overline{\mu}^{S}$ by functions $\overline{h}_{s}^{\mu} : \mathbb{R}^{2d} \times \mathbb{S}_{++}^{d} \rightarrow \mathbb{R}$, which can be evaluated using
\begin{multline}\label{eq:h mu}
    \overline{h}_s^{\mu}(\overline{x}_s,\Sigma_s) = \min\left\{h^{\mu}(\begin{bmatrix}
        p\\ v
    \end{bmatrix}) : \right. \\
    \left.(p-\overline{p}_{s})^{T}\Sigma_{s}^{-1}(p-\overline{p}_{s}) \leq 1,p\in \mathbb{R}^d, v\in \mathbb{R}^{d}\right\}.
\end{multline}
That is, if $\overline{h}_s^{\mu}(\overline{x}_s,\Sigma_s)$ is non-negative, then all agents within swarm $s$ satisfy  $h^\mu(x_i)\geq 0$.
Therefore, when swarm $s$ contains at least $N_\mu$ agents, swarm predicate $\mu$ presented in Definition \ref{def:swarm predicate} is true.
We can now transform $\psi$ into a new specification $\overline{\psi}$ by replacing each atomic predicate $\mu$ in $\psi$ according to the following rules:
\begin{itemize}
    \item If a predicate $\mu$ appears without negation, that predicate is replaced by 
    \begin{equation*}
        \bigvee_{s \in \overline{\mathcal{S}}(N^{\mu})}{\overline{\mu}^{s}}.
    \end{equation*}
    \item If a negated predicate $\neg\mu$ appears, the negated predicate is replaced by $$\left(\bigwedge_{s}{\neg\overline{\mu}^{s'}}\right)\lor\left(\bigvee_{s \in \underline{\mathcal{S}}(N^{\mu})}{\left(\overline{\mu}^{s} \land \left(\bigwedge_{s' \neq s}{\neg\overline{\mu}^{s'}}\right)\right)}\right).$$
\end{itemize}

Next, we convert the safety specification $\pi$ to one defined over the reduced-dimension state space. 
We rewrite $\pi$ as 
\begin{equation*}
    G_{[t_0,t_0+T]}\overline{\pi}_{1} \land G_{[t_0,t_0+T]}\overline{\pi}_{2},
\end{equation*}
where 
\begin{equation*}
    \overline{\pi}_{1} = \bigwedge_{s,s'\in\mathcal{S},s\neq s'}{\overline{\mu}_{ss'}}, \quad \overline{\pi}_{2} = \bigwedge_{s=1}^{S}{\bigwedge_{j=1}^{M}{\overline{\theta}_{sj}}}.
\end{equation*}
Here, the new predicate $\overline{\mu}_{ss'}$ is evaluated using predicate function
\begin{multline*}
\overline{h}_{ss'}^{\mu}(\bar{x}_s,\bar{x}_{s^{\prime}})= \min\{\|p_{i}-p_{j}\|_{2} - \zeta: \\(p_{i}-\overline{p}_{s})^{T}\Sigma_{s}^{-1}(p_{i}-\overline{p}_{s}) \leq 1, 
(p_{j}-\overline{p}_{s'})^{T}\Sigma_{s'}^{-1}(p_{j}-\overline{p}_{s'}) \leq 1\},
\end{multline*}
while $\overline{\theta}_{sj}$ has function
\begin{multline}
    \overline{h}_{sj}^{\theta}(\overline{x}_{s},\Sigma_{s}) = \min\left\{ \gamma_{j}(\begin{bmatrix}
        p\\
        v
    \end{bmatrix}) : \right. \\
    \left. (p - \overline{p}_{s})^{T}\Sigma_{s}^{-1}(p - \overline{p}_{s}) \leq 1, p \in \mathbb{R}^d,\ v \in \mathbb{R}^{d}\right\}.
\end{multline}
Intuitively, if $\overline{h}_{ss'}^{\mu}\geq 0$ for all swarms $s\neq s'$, then the distance between all pairs of agents is lower bounded by $\zeta$. Thus, $\pi_1$ in Definition \ref{def:safe-STL} is evaluated as true for any agent $i$ and $j$ belonging to different swarms.
If function $\overline{h}_{sj}^{\theta}(\overline{x}_{s},\Sigma_{s})\geq 0$ for all swarms $s\in\mathcal{S}$, it indicates that all agents $i$ within swarm $s$ ensures $\gamma_j(x_i)\geq 0$, and hence $\pi_2$ is true.

In the following, we show that satisfying the specifications defined over the reduced-dimension state space implies the satisfaction of $\phi$. We assume that for all $s\in \mathcal{S}$, we have $\|p_{i}(t_0)-p_{j}(t_0)\|_{2} \geq \zeta$ for all $i,j\in \mathcal{N}_S$ with $i\neq j$, and $(p_{i}(t_0)-\overline{p}_{s}(t_0))^{T}\Sigma_{s}(t_0)^{-1}(p_{i}(t_0)-\overline{p}_{s}(t_0)) \leq 1$.
\begin{theorem}
    \label{theorem:centroid-planning-sufficient}
    Let $\phi = \psi \wedge \pi$ be defined as in \eqref{eq:STL}. Suppose that there exist $(\overline{x}_{1}(t),\Sigma_{1}(t)),\ldots,(\overline{x}_{S}(t),\Sigma_{S}(t))$ and $u_{1}(t),\ldots,u_{N}(t)$ that satisfy the following over the horizon $[t_0,t_0+T]$:
    \begin{enumerate}
        \item \label{theorem1:constraint1}$(\{(\overline{x}_s(t), \Sigma_s(t))\}_{s\in \mathcal{S}}, t_0) \models (\overline{\psi} \wedge G_{[t_{0},t_{0}+T]}\overline{\pi}_{1} \wedge G_{[t_{0},t_{0}+T]}\overline{\pi}_{2})$ 
        \item \label{theorem1:constraint2}For all agents $i$ and $j$ belonging to the same swarm, $\|p_{i}(t)-p_{j}(t)\|_{2} \geq \zeta$ for all $t\in [t_{0},t_{0}+T]$
        \item \label{theorem1:constraint3}For all agents $i$ belonging to swarm $s$, $$(p_{i}(t)-\overline{p}_{s}(t))^{T}\Sigma_{s}(t)^{-1}(p_{i}(t)-\overline{p}_{s}(t)) \leq 1$$ for all $t \in [t_{0},t_{0}+T]$.
    \end{enumerate}
    Then $(\mathbf{x},t) \models \phi$.
\end{theorem}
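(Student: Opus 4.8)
The plan is to prove the two conjuncts of $\phi = \psi \wedge \pi$ separately, reducing each to hypotheses (1)--(3), and I would dispatch the safety specification $\pi = G_{[t_0,t_0+T]}\pi_1 \wedge G_{[t_0,t_0+T]}\pi_2$ first, since it follows most directly from the geometry of the bounding ellipsoids. Fix any $t \in [t_0,t_0+T]$. For the inter-agent clause $\pi_1 = \bigwedge_{i<j}\mu_{ij}$ I split each pair $(i,j)$ into two cases. If $i,j$ lie in the same swarm, hypothesis (2) gives $\|p_i(t)-p_j(t)\|_2 \geq \zeta$ outright. If $i\in\mathcal{N}_s$ and $j\in\mathcal{N}_{s'}$ with $s\neq s'$, then hypothesis (1), via the $G_{[t_0,t_0+T]}\overline{\pi}_1$ term, gives $\overline{h}^\mu_{ss'}\geq 0$ at time $t$; since $\overline{h}^\mu_{ss'}$ is by definition the minimum of $\|p-q\|_2-\zeta$ over $p$ in the ellipsoid of $s$ and $q$ in the ellipsoid of $s'$, and hypothesis (3) places the actual $p_i(t),p_j(t)$ inside those ellipsoids, I get $\|p_i(t)-p_j(t)\|_2-\zeta \geq \overline{h}^\mu_{ss'}\geq 0$. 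The obstacle clause $\pi_2=\bigwedge_i\bigwedge_j\theta_{ij}$ is handled identically: for $i\in\mathcal{N}_s$, hypothesis (1) gives $\overline{h}^\theta_{sj}\geq 0$, which is the minimum of $\gamma_j$ over the ellipsoid of $s$, so hypothesis (3) forces $\gamma_j(x_i(t))\geq \overline{h}^\theta_{sj}\geq 0$. Since $t$ was arbitrary, both $G$-formulas hold.

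For the swarm specification $\psi$, I would argue by structural induction on its NNF parse tree, using that $\overline{\psi}$ is obtained from $\psi$ by rewriting only the atoms while leaving the Boolean and temporal structure intact. The inductive steps for $\wedge,\vee,U_{[a,b]},F_{[a,b]},G_{[a,b]}$ are routine because the temporal operators quantify over identical time intervals in both formulas, so a pointwise-in-time implication at the atomic level propagates upward (e.g., if the reduced $F_{[a,b]}\phi'$ holds because reduced $\phi'$ holds at some $t_1$, the induction hypothesis yields original $\phi'$ at $t_1$, hence original $F_{[a,b]}\phi'$). Everything therefore reduces to the two base cases supplied by the replacement rules.

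The positive base case is clean. If $\bigvee_{s\in\overline{\mathcal{S}}(N_\mu)}\overline{\mu}^s$ holds at time $t$, I pick a witnessing swarm $s$ with $|\mathcal{N}_s|\geq N_\mu$ and $\overline{h}^\mu_s\geq 0$; since $\overline{h}^\mu_s$ is the minimum of $h^\mu$ over the ellipsoid and hypothesis (3) places every agent of $s$ inside it, all $|\mathcal{N}_s|\geq N_\mu$ of them satisfy $h^\mu(x_i(t))\geq 0$, so by Definition \ref{def:swarm predicate} the swarm predicate $\mu$ is true.

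The negated base case is where I expect the real work, and the main obstacle, to lie. Here I must show that satisfaction of the disjunctive replacement of $\neg\mu$ certifies $|\{i:h^\mu(x_i(t))\geq 0\}| < N_\mu$. The intended counting argument reads the formula as asserting that at most one swarm may contribute its full membership to this count, and that such a swarm is forced into $\underline{\mathcal{S}}(N_\mu)$ and hence has fewer than $N_\mu$ members, while every remaining swarm is flagged by some $\neg\overline{\mu}^{s'}$; summing these contributions should keep the total below $N_\mu$. The delicate point is that $\neg\overline{\mu}^{s'}$ only records that \emph{not all} agents of $s'$ satisfy $h^\mu\geq 0$ (the minimum over the ellipsoid is negative), rather than that \emph{no} agent does, so a naive bound allows a flagged swarm to still contribute up to $|\mathcal{N}_{s'}|-1$ satisfying agents. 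Closing this gap is the crux: I would need to argue that a swarm carrying $\neg\overline{\mu}^{s'}$ contributes no satisfying agents to the global count, which must be extracted from the construction of $\overline{h}^\mu_{s'}$ together with the coherence of each swarm's ellipsoid relative to the linear predicate hyperplane (or an accompanying assumption on how tightly the ellipsoid brackets its agents). Once the count is established to be strictly below $N_\mu$, Definition \ref{def:swarm predicate} yields $(\mathbf{x},t)\models\neg\mu$, completing the base case and the induction, and together with the $\pi$ argument gives $(\mathbf{x},t)\models\phi$.
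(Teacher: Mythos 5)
Your decomposition is the same as the paper's: safety first via the min-over-ellipsoid functions together with conditions (2) and (3), then $\psi$ by propagating an atomic-level implication through the untouched Boolean/temporal structure of $\overline{\psi}$. The $\pi_1$, $\pi_2$, and positive-atom arguments are correct and essentially identical to the paper's.

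The one place you stop short is the negated base case, and you have diagnosed the issue exactly: if $\neg\overline{\mu}^{s'}$ is read as the Boolean negation of the min-based predicate, it only says $\overline{h}^{\mu}_{s'}<0$, i.e.\ \emph{some point of the ellipsoid} violates $h^{\mu}\geq 0$, which bounds nothing about how many agents of $s'$ satisfy $h^{\mu}\geq 0$, and the counting argument does not close. The resolution, which the paper uses implicitly (its proof simply asserts that ``all other swarms have $h^{\mu}(x_i)<0$ for all $i\in\mathcal{N}_s$'' after putting $\psi$ in NNF), is that $\neg\mu$ is itself treated as an atomic predicate with predicate function $h^{\neg\mu}=-h^{\mu}$, and $\neg\overline{\mu}^{s'}$ denotes the swarm lift of \emph{that} predicate: the minimum of $-h^{\mu}$ over the bounding ellipsoid of $s'$ is nonnegative, equivalently $h^{\mu}\leq 0$ on the entire ellipsoid. (This convention is visible elsewhere in the paper, e.g.\ the under-approximation $\hat{h}^{\neg\mu}(x_i)=-F^Tp_i-g-\epsilon\|F\|_2$ and the relaxed negation constraints, which bound the \emph{maximum} of $h^{\mu}$ over the ellipsoid.) With that reading, condition (3) forces every agent of a flagged swarm to have $h^{\mu}(x_i)\leq 0$, so each such swarm contributes zero to the count; the first disjunct of the replacement then gives a count of zero, and the second gives a count of at most $|\mathcal{N}_s|<N_{\mu}$ for the single unflagged swarm $s\in\underline{\mathcal{S}}(N_{\mu})$. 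That is the missing step; the rest of your induction then goes through as you describe. To be fair to you, the paper's own write-up elides this interpretation, so the ``gap'' you flag is as much an imprecision in the paper's notation as in your argument — but a complete proof does need to state and use the max-based reading of the negated lifted predicate.
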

\begin{proof}
We divide the proof into two steps. First, we show that the conditions in Theorem $\ref{theorem:centroid-planning-sufficient}$ are sufficient to ensure that $({\mathbf{x}},t) \models \psi$ holds. Then, we prove that these conditions are also sufficient to guarantee $({\mathbf{x}},t) \models \pi$. Therefore, it follows that $({\mathbf{x}},t) \models \phi$.
    If $\{(\overline{x}_s(t), \Sigma_s(t))\}_{s\in \mathcal{S}} \models (\overline{\psi} \wedge G_{[t_{0},t_{0}+T]}\overline{\pi}_{1} \wedge G_{[t_{0},t_{0}+T]}\overline{\pi}_{2})$, we have that $\overline{\psi}$ is evaluated true and $\overline{\pi}_1$ and $\overline{\pi}_2$ are evaluated true for all $t\in[t_0,t_0+T]$.
    We first prove that if $\overline{\psi}$ is true, then $\psi$ is true. Suppose that $\overline{\mu}^{s}$ appears in $\overline{\psi}$ and is true. Define $x = [p^T,\ v^T]^T$ with $p\in \mathbb{R}^d$ and $v\in \mathbb{R}^d$. Since $\overline{h}^{\mu}_{s}(\overline{x}_{s},\Sigma_{s}) \geq 0$, we have that $h^{\mu}(x) \geq 0$ for all $x$ satisfying $(p-\overline{p}_{s})^{T}\Sigma_{s}^{-1}(p-\overline{p}_{s}) \leq 1$. By condition 3), this implies that $h^{\mu}(x_{i}) \geq 0$ for all $i \in \mathcal{N}_{s}$. Since $|\mathcal{N}_{s}| \geq N_{\mu}$, we have that atomic proposition $\mu$ is satisfied.

Next, suppose that $\neg \overline{\mu}^s$ is satisfied. By a similar argument, we have that at most one swarm satisfies $h^{\mu}(x_{i}) \geq 0$ for all $i \in \mathcal{N}_{s}$, that swarm has cardinality less than $N_{\mu}$, and all other swarms have $h^{\mu}(x_{i}) < 0$ for all $i \in \mathcal{N}_{s}$. Hence $|\{i : h^{\mu}(x_{i}) \geq 0\}| < N_{\mu}$, implying that $\neg \mu$ is satisfied. Since $\overline{\psi}$ evaluates to true with the satisfied predicates $\mu$ and $\neg \mu$, the formula $\psi$ will also evaluate to true by construction.
    
    We next prove that when $\overline{\pi}_1$ and $\overline{\pi}_2$ are evaluated true for all $t\in[t_0,t_0+T]$ and condition 2) holds, safety specification $\pi$ is also true.
    If $\overline{\pi}_1$ is true, we have that $\overline{h}_{ss'}^\mu\geq 0$.
    Therefore, for any agents $i,j\in\mathcal{N}$ belonging to different swarms $s\neq s'$, $\|p_i-p_j\|_2-\zeta\geq 0$ holds. 
    When condition 2) holds, by Definition \ref{def:safe-STL}, we have that $h_{ij}^\mu(x)\geq 0$ for all $t\in[t_0,t_0+T]$ for all agents $i$ and $j$ regardless whether they belong to the same swarm or not, and hence $\pi_1$ is true.
    If $\overline{\pi}_2$ is true, we have that $\overline{h}_{sj}^\theta\geq 0$.
    Consider an arbitrary agent $i$ within swarm $s$. 
    We have that $(p_i - \overline{p}_{s})^{T}\Sigma_{s}^{-1}(p_i-\overline{p}_{s}) \leq 1$.
    Since $\overline{h}_{sj}^\theta\geq 0$, then $h_{ij}^\theta\geq 0$ must also hold for all $i=1,\ldots,N$ and $j=1,\ldots,M$. 
    Consequently, $\pi_2$ is true when $\overline{\pi}_2$ holds. 
    Combining the arguments above, we have that $(\mathbf{x},t)\models\phi$ when the conditions in Theorem \ref{theorem:centroid-planning-sufficient} hold.
\end{proof}

Let $\{(\bar{x}_s(t),\Sigma_s(t))\}_{s\in \mathcal{S}}$ be a trajectory in the reduced-dimension state space.
We say the trajectory is \emph{valid} if the conditions in Theorem \ref{theorem:centroid-planning-sufficient} are met. 

In what follows, we use two steps to find a valid trajectory. In Step $1$, as described in Section \ref{sec:planning}, we first construct sufficient conditions for constraint \ref{theorem1:constraint1}) in Theorem \ref{theorem:centroid-planning-sufficient}. We then formulate the problem of finding $\{(\overline{x}_s(t), \Sigma_{s}(t))\}_{s\in \mathcal{S}}$ as an optimization problem. Finally, we propose a method to solve it. In Step $2$, as described in Section \ref{sec:low-level}, we propose a control policy for agents in swarms ensuring constraints \ref{theorem1:constraint2}) and \ref{theorem1:constraint3}) in Theorem \ref{theorem:centroid-planning-sufficient} are satisfied. Therefore, the trajectory synthesized in Step $1$ is valid.

\subsection{Planning in the Reduced-Dimension State Space}\label{sec:planning}
Using Theorem \ref{theorem:centroid-planning-sufficient}, we have that the original STL specification is satisfied if we can find a valid trajectory. 
In what follows, we abstract valid trajectories using a sequence of states in the reduced-dimension state space, denoted as a path.
We derive sufficient conditions under which trajectories traversing the path are valid.
We develop an algorithm for finding such paths. 
All proofs are deferred to Appendix.

Given the initial $\overline{x}_{s}(t_0)$ for all $s\in \mathcal{S}$ based on the given initial states $x_i(t_0)$ for all $i\in \{1, \ldots, N\}$. 
We define a path for a swarm $s\in\mathcal{S}$ in the reduced-dimension state space as follows. 
\begin{definition}\label{def:path}
    A finite sequence of time-stamped states, denoted as $\{(t_{s,k},\bar{x}_{s,k},\Sigma_{s,k})\}_{k=1}^{K_s}$, constitutes a path in the reduced-dimension state space if for all $k=1,\ldots,K_s$ there exist some control $u_i(t)$ for all agents $i$  
    for each swarm $s$ and time $t\in[t_{s,k-1},t_{s,k}]$ such that 
    \begin{enumerate}
    \item \label{def4con1}$t_{s,0} = t_0, \forall s\in \mathcal{S}$,
    \item \label{def4con2}$\overline{x}_{s}(t_{s,0}) = \overline{x}_{s}(t_0), \forall s\in\mathcal{S}$,
        \item \label{def4con3}$\bar{x}_s(t_{s,k-1})= \bar{x}_{s,k-1}$,
        \item \label{def4con4}$\bar{x}_s(t_{s,k})= \bar{x}_{s,k}$,
        \item \label{def4con5}$\bar{x}_s(t')=\bar{x}_{s,k-1}+\int_{t=t_{s,k-1}}^{t'}( A\overline{x}_{s}(t) + B\overline{u}_{s}(t))\dd t$ for all $t'\in[t_{s,k-1},t_{s,k}]$ and $k=1,\ldots,K_s$,
        \item \label{def4con6}$(p_i(t)-\overline{p}_s(t))^T\Sigma_{s,k}^{-1}(p_i(t)-\overline{p}_s(t))\leq 1$ holds for all agents $i$ in swarm $s$, $t\in[t_{s,k-1},t_{s,k}]$, and $k=1,\ldots,K_s$.
        \item \label{def4con7}$\Sigma_{s}(t) = \Sigma_{s,k}$, $\forall s\in \mathcal{S}$, $\forall t\in[t_{s,k-1},t_k]$
    \end{enumerate}
\end{definition}
Given Definition \ref{def:path}, the path of all swarms is expressed as $\mathcal{P}=\bigcup_{s\in\mathcal{S}}\bigcup_{k=1,\ldots,K_s}(t_{s,k},\bar{p}_{s,k},\Sigma_{s,k})$. In this paper, we define the transition process as the phase during which the bounding ellipsoid matrix changes from $\Sigma_{s,k-1}$ to $\Sigma_{s,k}$ for all $s\in \mathcal{S}$ and $k=1, \ldots, K_s$. Agents in swarm $s$ adjust their positions to relocate within the new bounding ellipsoid with the same centroid position $\bar{p}_{s,k}$. By the end of the transition process, all agents within a swarm stop at their new locations, i.e., $v_i(t_{s,k})=0$ for all $i\in \mathcal{N}_s$.
We assume that the transition process will not violate the conditions in Theorem \ref{theorem:centroid-planning-sufficient}. 

In what follows, we construct sufficient conditions to find a feasible $\mathcal{P}$. We propose a method to alternatively search over $(t_{s,k}, \bar{p}_{s,k})$ and $\Sigma_{s,k}$ for all $s\in \mathcal{S}$ and $k=1, \ldots, K_s$. Specifically, we first define $\epsilon$-valid trajectory by evaluating specifications in condition \ref{theorem1:constraint1}) in Theorem \ref{theorem:centroid-planning-sufficient} using under approximation functions. We evaluate $\bar{\pi}_1$, $\bar{\pi}_2$ and $\bar{\psi}$ by defining under approximation functions $\hat{h}_{ij}^{\mu}(\mathbf{x}(t)) = h_{ij}^{\mu}(\mathbf{x}(t))-\epsilon$ of $h_{ij}^{\mu}(\mathbf{x}(t))$, 
$\hat{h}_{ij}^{\theta}(\mathbf{x}(t)) = h_{ij}^{\theta}(\mathbf{x}(t))-\epsilon\|a_j\|_2$ of $h_{ij}^{\theta}(\mathbf{x}(t))$, $\hat{h}^{\mu}(x_i)=F^Tp_i+g-\epsilon\|F\|_2$ of $h^{\mu}(x_i)$, and $\hat{h}^{\neg\mu}(x_i)=-F^Tp_i-g-\epsilon\|F\|_2$ of $h^{\neg\mu}(x_i)$.
A nonnegative $\epsilon$ validates that under approximation conditions are sufficient conditions for the original ones. 

Next, we formulate the problem of generating a valid trajectory as follows:
\begin{align}
\label{problem:optimization-formulation}
\begin{array}{ll}
\underset{\mathcal{P}, \epsilon}{\max} & \epsilon \\
\mbox{s.t.} 
& \mbox{$\mathcal{P}$  is a path},\\
& \mbox{$(\overline{\mathbf{x}}(t), \Sigma(t))$ traversing $\mathcal{P}$ is $\epsilon$-valid.}
\end{array}
\end{align}

In the following, we derive a set of conditions to be satisfied by path $\mathcal{P}$ such that any trajectory $(\bar{\mathbf{x}}(t),\Sigma(t))$ traversing $\mathcal{P}$ satisfies specification $\phi$.
Specifically, we observe that given a path $\{(t_{s,k},\bar{x}_{s,k},\Sigma_{s,k})\}_{k=1}^{K_s}$, we can decompose any trajectory traversing the path into a sequence of segments, denoted as $\{\mathtt{SEG}_{s, k}\}_{k=1}^{K_s}$, where $\mathtt{SEG}_{s, k}$ represents the trajectory of $\bar{x}_s(t)$ for $t\in[t_{s,k-1},t_{s,k}]$ with a fixed bounding ellipsoid matrix $\Sigma_{s,k}$. 
We will derive sufficient conditions to characterize the behavior of each swarm at each segment to satisfy $\phi$. 

Considering the constraint that $\mathcal{P}$ is a path, the conditions \ref{def4con1}) to \ref{def4con4}) in Definition  \ref{def:path} are used to define the waypoints and corresponding timestamps for all segments. A time progression constraint will be automatically introduced. The condition \ref{def4con5}) is relaxed to a path reachability constraint. The condition \ref{def4con6}) is incorporated into the $\epsilon$-valid constraint with an intra-swarm collision avoidance condition. The condition \ref{def4con7}) enforces that the bounding ellipsoid matrices are preserved during the time interval $[t_{s,k-1},\ t_{s,k}]$ for all $s\in \mathcal{S}$ and $k=1, \ldots, K_s$.

Considering the constraint that $(\bar{\mathbf{x}}(t),\Sigma(t))$ traversing $\mathcal{P}$ is $\epsilon$-valid, we give sufficient conditions to the relaxed $\bar{\pi}_1$, relaxed $\bar{\pi}_2$, and relaxed $\bar{\psi}$, incorporating the condition \ref{def4con6}) in Definition \ref{def:path}.
\subsubsection{Time progression} To ensure the path $\mathcal{P}$ is well-defined, we require for all $s\in\mathcal{S}$ and for all $k = 0, \ldots, K_s-1$,  \begin{equation}\label{eq:time seq}
    t_0\leq t_{s,1}\leq \ldots\leq t_{s,k}\leq\ldots\leq t_{s,K_s}\leq t_0+T,
\end{equation} 
where $T$ denotes the pre-specified time limit for the task. 

\subsubsection{Path Reachability}
We first focus on path $\mathcal{P}$ and ensure trajectories traversing $\mathcal{P}$ are compatible with the dynamics of centroids in Eq. \eqref{eq:dynamics-centroid}.
Particularly, state $\bar{x}_{s,k}$ should be reachable from $\bar{x}_{s,k-1}$ within time $t_{s,k}-t_{s,k-1}$ for all $k=1,\ldots,K_s$ and swarm $s\in\mathcal{S}$. In the present paper, we implement a trajectory tracking controller for the centroid of swarms \cite{zheng2014trajectory,rodriguez2014trajectory}. We assume that the tracking controller ensures 1) the centroid is capable of reaching the next waypoint from the last one, and 2) the centroid remains within a distance $\eta$ of the line connecting two waypoints in a segment. Specifically, we define the states on the strict line connecting $x_{s,k-1}$ and $x_{s,k}$ as $l_{s,k}(\lambda)\triangleq (1-\lambda)x_{s,k-1}+\lambda x_{s,k}$, $\lambda\in [0,1]$. We assume the controller ensures $\min_{\lambda\in [0,1]}\|\overline{x}_s(t)-l_{s,k}(\lambda)\|_2\leq \eta$ for all $t\in [t_{s,k-1}, t_{s,k}]$. 
Similar to the setup in \cite{sun2022multi}, we require for all $s\in\mathcal{S}$ and for all $k = 0, \ldots, K_s-1$, 
\begin{equation}
\label{eq:velocity constraint}
    \|\overline{p}_{s,k+1}-\overline{p}_{s,k}\|_2\leq \chi \cdot (t_{s,k+1}-t_{s,k}),
\end{equation}
where $\chi$ denotes the limitation caused by actuation constraints.

\subsubsection{Satisfaction of relaxed $\bar{\pi}_1$}
In what follows, we consider the relaxed safety specification $\bar{\pi}_1$.
We need to ensure that $\overline{h}_{ss'}^{\mu}(\bar{x}_s, \bar{x}_{s^{\prime}})\geq \epsilon$ for all time $t\in[t_0,t_0+T]$ such that agents belonging to swarm $s\neq s'$ remain at least $(\zeta+\epsilon)$ distance away from each other.
Evaluating $\overline{h}_{ss'}^{\mu}(\bar{x}_s, \bar{x}_{s^{\prime}})\geq \epsilon$ involves computing $\|p_i-p_j\|_2$ for all pairs of agents $i$ and $j$ in swarms $s$ and $s'$, and all time $t \in [t_0, t_0+T]$.
To mitigate the computational challenge, we over-approximate the satisfaction of $\bar{\pi}_1$ by constructing a set, denoted as $\Gamma_{s,k}$, to bound the position of centroid $\bar{x}_s$ for swarm $s$ during $\mathtt{SEG}_{s,k}$. Based on the design of the trajectory tracking controller, we have $\Gamma_{s,k} = \{p\in \mathbb{R}^d: \min_{\lambda\in [0,1]}\|p-l_{s,k}(\lambda)\|_2\leq \eta\}$.

Our intuition is that if the sets $\Gamma_{s,k}$ and $\Gamma_{s',k'}$ of swarms $s$ and $s'$ are sufficiently far away from each other, then agents belonging to different swarms will also stay away from each other without creating any collision.

We next bound the distance between the sets $\Gamma_{s,k}$ and $\Gamma_{s',k'}$, and therefore distance between trajectories contained in these sets.
Let $s_1$ and $s_2$ be two distinct swarms.
A trivial condition to ensure that agents belonging to $s_1$ will not collide with $s_2$ is that 
\begin{equation}\label{eq:time nonoverlap}
    [t_{s_1, k_1-1}, t_{s_1,k_1}]\cap[t_{s_2, k_2-1}, t_{s_2,k_2}]=\emptyset.
\end{equation}
When $[t_{s_1, k_1-1}, t_{s_1,k_1}]\cap[t_{s_2, k_2-1}, t_{s_2,k_2}]\neq\emptyset$, we guarantee collision avoidance among agents from swarms $s_1$ and $s_2$ by deriving a lower bound for the distance between $\Gamma_{s,k}$ and $\Gamma_{s',k'}$.
We denote the distance between sets $\Gamma_{s_1,k_1}$ and $\Gamma_{s_2,k_2}$ as $\text{dist}(\Gamma_{s_1,k_1},\Gamma_{s_2,k_2})$. 
In the following, we derive a sufficient condition on $\text{dist}(\Gamma_{s_1,k_1},\Gamma_{s_2,k_2})$ such that $\bar{h}_{s_1s_2}^\mu\geq \epsilon$.

We define the bounding ellipsoid of swarm $s$ during $ \mathtt{SEG}_{s,k}$ as $\overline{\Omega}_{s,k}\triangleq\{\begin{bmatrix}
    p\\v
\end{bmatrix}\in\mathcal{X} :(p-\overline{p}_{s})^T\Sigma_{s,k}^{-1}(p-\overline{p}_{s})\leq 1\}$. Let $\lambda_{\text{max}}(\Sigma_{s,k})$ denote the maximum eigenvalue of matrix $\Sigma_{s,k}$.
\begin{lemma}\label{lemma:sufficient condition inter-swarm distance}
    If the following inequality holds for swarms $s_1\neq s_2$:
    \begin{multline}\label{eq:original inter-swarm constr}
    \|\frac{1}{2}(\bar{p}_{s_1,k_1} + \bar{p}_{s_1,k_1+1}) - \frac{1}{2}(\bar{p}_{s_2,k_2} - \bar{p}_{s_2,k_2+1})\|_1\geq \\ \|\frac{1}{2}(\bar{p}_{s_1,k_1} - \bar{p}_{s_1,k_1+1})\|_1 + \|\frac{1}{2}(\bar{p}_{s_2,k_2} - \bar{p}_{s_2,k_2+1})\|_1+\\
    (\sqrt{\lambda_{\text{max}}(\Sigma_{s_1,k_1})}+ \sqrt{\lambda_{\text{max}}(\Sigma_{s_2,k_2})}+ 2\eta + \zeta + \epsilon)\sqrt{d}
    ,
\end{multline}
then $\bar{h}_{s_1s_2}^\mu\geq \epsilon$.
\end{lemma}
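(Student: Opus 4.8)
The plan is to read the hypothesis \eqref{eq:original inter-swarm constr} as a worst-case separation guarantee. By definition $\overline{h}_{s_1 s_2}^{\mu}$ is the minimum of $\|p_i-p_j\|_2-\zeta$ taken over all $p_i$ in the ellipsoid centred at $\overline{p}_{s_1}$ and all $p_j$ in the ellipsoid centred at $\overline{p}_{s_2}$, where the admissible centroids range over the tubes $\Gamma_{s_1,k_1}$ and $\Gamma_{s_2,k_2}$. Hence the claim $\overline{h}_{s_1 s_2}^{\mu}\geq\epsilon$ is equivalent to showing $\|p_i-p_j\|_2\geq\zeta+\epsilon$ for every such admissible pair, and the entire argument reduces to a single chain of triangle inequalities that peels away, in turn, the ellipsoids, the tracking tubes, and finally the reference segments.

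First I would peel off the two ellipsoids: diagonalising $\Sigma_{s_1,k_1}$ shows that any $p_i$ in $\overline{\Omega}_{s_1,k_1}$ satisfies $\|p_i-\overline{p}_{s_1}\|_2\leq\sqrt{\lambda_{\text{max}}(\Sigma_{s_1,k_1})}$ (the largest semi-axis), and symmetrically for $s_2$. Next, since the tracking controller keeps each centroid within distance $\eta$ of its reference line, the centroids satisfy $\|\overline{p}_{s_1}-\overline{p}_{s_2}\|_2\geq\mathrm{dist}(\Gamma_{s_1,k_1},\Gamma_{s_2,k_2})$, and the tube distance in turn satisfies $\mathrm{dist}(\Gamma_{s_1,k_1},\Gamma_{s_2,k_2})\geq(\text{segment-to-segment distance})-2\eta$. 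Chaining these with the triangle inequality, $\|p_i-p_j\|_2$ is bounded below by the distance between the two reference segments minus $\sqrt{\lambda_{\text{max}}(\Sigma_{s_1,k_1})}+\sqrt{\lambda_{\text{max}}(\Sigma_{s_2,k_2})}+2\eta$. What remains is a lower bound on the distance between the two reference segments.

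This segment-to-segment estimate is the decisive step, and it is also where the $\ell_1$ norms and the $\sqrt d$ factor in \eqref{eq:original inter-swarm constr} come from. I would parametrise a point of segment $k_1$ as $\tfrac12(\overline{p}_{s_1,k_1}+\overline{p}_{s_1,k_1+1})+\alpha\cdot\tfrac12(\overline{p}_{s_1,k_1+1}-\overline{p}_{s_1,k_1})$ with $\alpha\in[-1,1]$, i.e. midpoint plus a scaled half-displacement, and similarly for segment $k_2$; this is exactly the grouping that appears in the statement. Working entirely in $\ell_1$, the triangle inequality gives that the $\ell_1$ distance between the segments is at least $\|\tfrac12(\overline{p}_{s_1,k_1}+\overline{p}_{s_1,k_1+1})-\tfrac12(\overline{p}_{s_2,k_2}+\overline{p}_{s_2,k_2+1})\|_1-\|\tfrac12(\overline{p}_{s_1,k_1}-\overline{p}_{s_1,k_1+1})\|_1-\|\tfrac12(\overline{p}_{s_2,k_2}-\overline{p}_{s_2,k_2+1})\|_1$. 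The subtle point, and the one I expect to be the main obstacle, is that the conversion to $\ell_2$ must be deferred to the very end and applied uniformly via $\|z\|_2\geq\|z\|_1/\sqrt d$, so that the whole $\ell_1$ quantity, including the two half-displacement terms, is divided by $\sqrt d$ at once; converting each $\|\cdot\|_1$ term to $\ell_2$ prematurely (through $\|z\|_2\leq\|z\|_1$) would be lossy and the constants would fail to close. Feeding \eqref{eq:original inter-swarm constr}, which lower-bounds precisely this $\ell_1$ quantity by $(\sqrt{\lambda_{\text{max}}(\Sigma_{s_1,k_1})}+\sqrt{\lambda_{\text{max}}(\Sigma_{s_2,k_2})}+2\eta+\zeta+\epsilon)\sqrt d$, the factor $\sqrt d$ cancels and the chain collapses to $\|p_i-p_j\|_2\geq\zeta+\epsilon$, giving $\overline{h}_{s_1 s_2}^{\mu}\geq\epsilon$. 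I would also flag that the second term of \eqref{eq:original inter-swarm constr} appears to carry a sign typo and should read $\tfrac12(\overline{p}_{s_2,k_2}+\overline{p}_{s_2,k_2+1})$, the midpoint of segment $k_2$.
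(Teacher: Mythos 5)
Your proposal is correct and matches the paper's own argument in all essentials: the same decomposition of the agent--agent distance into ellipsoid radii ($\sqrt{\lambda_{\max}}$ terms), tracking-tube radii ($2\eta$), and reference-segment separation, and the same $\ell_1$ midpoint-plus-half-displacement triangle inequality with the $\|z\|_1\leq\sqrt{d}\,\|z\|_2$ conversion applied once to the full quantity. The only cosmetic difference is that the paper phrases the segment-separation step as a proof by contradiction while you argue the chain directly, and your observation that the second term should be the midpoint $\tfrac12(\bar{p}_{s_2,k_2}+\bar{p}_{s_2,k_2+1})$ is right --- the paper's proof carries the same sign typo but uses the quantity as the midpoint.
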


The authors in \cite{sun2022multi} considered the size of agents and proposed a sufficient condition to ensure the collision avoidance for agents in different segments. Compared with their work, we consider the bounding ellipsoids information, instead of the size of a single agent, and introduce the feasible margin in order to find a feasible solution in \eqref{eq:original inter-swarm constr}.

Now the constraint to guarantee the satisfaction of $\bar{\pi}_1$ is formulated as  
\begin{align}
    \label{constraint:inter-swarm}
    \bigwedge _{\substack{s_{1}\neq s_{2}}}\bigwedge _{\substack{k_1=1,\ldots,K_{s_{1}}\\ k_2=1,\ldots,K_{s_{2}}}} \mathtt {safe}_{1} (\mathtt{SEG}_{s_1,k_1},  \mathtt{SEG}_{s_2,k_2}, \eta )
\end{align}
where
\begin{multline}\label{eq:safety 1 constr}
    \mathtt {safe}_{1} (\mathtt{SEG}_{s_1,k_2},  \mathtt{SEG}_{s_2,k_2}, \eta )\\
    = ([t_{s_1,k_1-1}, t_{s_1,k_1}] \cap [t_{s_2,k_2-1}, t_{s_2,k_2}] = \emptyset ) \\
    \vee
    \Big (\|\frac{1}{2}(\bar{p}_{s_1,k_1} + \bar{p}_{s_1,k_1+1}) - \frac{1}{2}(\bar{p}_{s_2,k_2} - \bar{p}_{s_2,k_2+1})\|_1\geq \\ \|\frac{1}{2}(\bar{p}_{s_1,k_1} - \bar{p}_{s_1,k_1+1})\|_1 + \|\frac{1}{2}(\bar{p}_{s_2,k_2} - \bar{p}_{s_2,k_2+1})\|_1\\
    +(\sqrt{\lambda_{\text{max}}(\Sigma_{s_1,k_1})}+ \sqrt{\lambda_{\text{max}}(\Sigma_{s_2,k_2})}+ 2\eta+\zeta+\epsilon)\sqrt{d}\Big). 
\end{multline}
The sufficiency of constraint \eqref{constraint:inter-swarm} to enforce $\bar{\pi}_1$ is shown below.
\begin{theorem}\label{thm: satisfaction of pi1}
If \eqref{constraint:inter-swarm}
    is true, then the safety specification $\bar{\pi}_1$ is guaranteed for all swarms. 
\end{theorem}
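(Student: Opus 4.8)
The plan is to reduce satisfaction of $\bar{\pi}_1$ over the whole horizon to a pointwise-in-time claim, and then discharge that claim by a two-case argument driven by the disjunction structure of $\mathtt{safe}_1$ in \eqref{eq:safety 1 constr}, invoking Lemma \ref{lemma:sufficient condition inter-swarm distance} on the geometric branch. Since $\bar{\pi}_1 = \bigwedge_{s \neq s'} \overline{\mu}_{ss'}$ appears under the operator $G_{[t_0,t_0+T]}$, it suffices to show that for every pair of distinct swarms $s_1 \neq s_2$ and every time $t \in [t_0, t_0+T]$, the relaxed predicate holds, namely $\overline{h}_{s_1 s_2}^{\mu} \geq \epsilon$, since $\epsilon \geq 0$ then yields $\overline{h}_{s_1 s_2}^{\mu} \geq 0$ and hence $\overline{\mu}_{s_1 s_2}$ by the discussion preceding Theorem \ref{theorem:centroid-planning-sufficient}.

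First I would fix an arbitrary ordered pair $s_1 \neq s_2$ and an arbitrary time $t$. Because the timestamps ordered by \eqref{eq:time seq} partition $[t_0,t_0+T]$ into the consecutive segment intervals of each swarm, there exist segment indices $k_1$ and $k_2$ with $t \in [t_{s_1,k_1-1}, t_{s_1,k_1}]$ and $t \in [t_{s_2,k_2-1}, t_{s_2,k_2}]$. In particular both intervals contain $t$, so their intersection is nonempty and the first disjunct of $\mathtt{safe}_1(\mathtt{SEG}_{s_1,k_1}, \mathtt{SEG}_{s_2,k_2}, \eta)$ is false.

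Next, since the conjunction \eqref{constraint:inter-swarm} ranges over all pairs of distinct swarms and all pairs of segment indices, the clause $\mathtt{safe}_1(\mathtt{SEG}_{s_1,k_1}, \mathtt{SEG}_{s_2,k_2}, \eta)$ is one of its conjuncts and therefore evaluates to true. Combined with the previous step, the surviving disjunct, i.e. the geometric inequality \eqref{eq:original inter-swarm constr} relating the segment midpoints, half-lengths, the ellipsoid radii $\sqrt{\lambda_{\max}(\Sigma_{s_1,k_1})}, \sqrt{\lambda_{\max}(\Sigma_{s_2,k_2})}$, the tube width $\eta$, and the margins $\zeta, \epsilon$, must hold for $(s_1,k_1)$ and $(s_2,k_2)$. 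Applying Lemma \ref{lemma:sufficient condition inter-swarm distance} to this inequality yields $\overline{h}_{s_1 s_2}^{\mu} \geq \epsilon$ at time $t$. As $s_1, s_2$, and $t$ were arbitrary, $\overline{\mu}_{s_1 s_2}$ holds at every $t \in [t_0,t_0+T]$ for every pair, so $G_{[t_0,t_0+T]}\bar{\pi}_1$ is satisfied.

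The hard part will not be the logical skeleton, which is a clean case split, but rather making the pointwise reduction airtight across segment boundaries and transition phases. At a shared boundary time a swarm may lie in two adjacent intervals simultaneously, so I must verify that \eqref{eq:original inter-swarm constr} is imposed for \emph{every} index pair $(k_1,k_2)$ whose intervals can overlap; this is precisely why the conjunction in \eqref{constraint:inter-swarm} quantifies over all $k_1$ and $k_2$ independently rather than over a single matched schedule, which guarantees no uncovered pair. The transition phases, during which $\Sigma_{s,k-1}$ morphs into $\Sigma_{s,k}$ at a fixed centroid, are covered by the standing assumption that transitions do not violate the conditions of Theorem \ref{theorem:centroid-planning-sufficient}, so they require no separate treatment here.
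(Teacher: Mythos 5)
Your proof is correct and follows essentially the same route as the paper's: quantify over all pairs of segments via the conjunction in \eqref{constraint:inter-swarm}, split on the disjunction inside $\mathtt{safe}_1$, and invoke Lemma \ref{lemma:sufficient condition inter-swarm distance} on the geometric branch. Your pointwise-in-time framing (fixing $t$ first, so the active segments necessarily overlap and the time-disjointness branch is handled vacuously) is a slightly cleaner packaging of the paper's case split, but not a different argument.
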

Theorem \ref{thm: satisfaction of pi1} follows from Lemma \ref{lemma:sufficient condition inter-swarm distance} and the time constraint. If two segments are disjoint in the time horizon, then the corresponding swarms avoid collision. Otherwise, based on the results in Lemma \ref{lemma:sufficient condition inter-swarm distance}, we have $\overline{h}_{s_1 s_2}^{\mu}\geq \epsilon$ hold, implying the collision avoidance between swarms. 
According to Theorem \ref{thm: satisfaction of pi1}, we have that if the time intervals associated with swarms $s_1$ and $s_2$ do not overlap (i.e., Eq. \eqref{eq:time nonoverlap} holds) or swarms $s_1$ and $s_2$ are sufficiently far from each other (i.e., \eqref{eq:safety 1 constr} holds) for all $s_1\neq s_2$, then safety specification $\pi_1$ is satisfied.

\subsubsection{Satisfaction of relaxed $\bar{\pi}_2$} We next focus on the relaxed safety specification $\bar{\pi}_2$.
Our goal is to derive constraints in the reduced-dimension state space such that $\bar{h}_{sj}^\theta(\bar{x}_s,\Sigma_s)\geq \epsilon\|a_j\|_2$ holds.
In what follows, we derive a sufficient condition to guarantee $\bar{h}_{sj}^\theta(\bar{x}_s,\Sigma_s)\geq \epsilon\|a_j\|_2$ and thereby $\gamma_j(x_i)=a_j^Tp_i+b_j-(\zeta+\epsilon)\|a_j\|_2\geq 0$ holds for all agents $i$ in swarm $s$.
\begin{lemma}\label{lemma:stl-swarm-inner}
    If $a_{j}^T\bar{p}_{s,k-1}+b_j - \sqrt{a_{j}^{T}\Sigma_{s,k} a_{j}} 
 - (\eta + \epsilon)\|a_j\|_2\geq 0$ and $a_{j}^T\bar{p}_{s,k}+b_j - \sqrt{a_{j}^{T}\Sigma_{s,k} a_{j}} 
 - (\eta +\epsilon)\|a_j\|_2\geq 0$ hold, then $\overline{h}_{s,j}^{\theta}(\bar{x}_s,\Sigma_s)\geq \epsilon\|a_j\|_2$ holds during $\mathtt{SEG}_{s,k}$.
\end{lemma}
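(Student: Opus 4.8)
The plan is to reduce the claim to a closed-form expression for $\overline{h}_{sj}^{\theta}$ and then exploit the affinity of $\gamma_j$ together with the $\eta$-tube guarantee of the tracking controller. First I would rewrite $\overline{h}_{sj}^{\theta}(\overline{x}_s,\Sigma_{s,k})$ explicitly. Since $\gamma_j([p;v]) = a_j^{T}p + b_j$ depends only on the position and the feasible set is the ellipsoid $(p-\overline{p}_s)^{T}\Sigma_{s,k}^{-1}(p-\overline{p}_s)\leq 1$, minimizing a linear functional over an ellipsoid is a standard computation giving
\[
\overline{h}_{sj}^{\theta}(\overline{x}_s,\Sigma_{s,k}) = a_j^{T}\overline{p}_s + b_j - \sqrt{a_j^{T}\Sigma_{s,k}\,a_j}.
\]
Because $\Sigma_{s,k}$ is held fixed throughout $\mathtt{SEG}_{s,k}$ by condition \ref{def4con7}) of Definition \ref{def:path}, the margin term $\sqrt{a_j^{T}\Sigma_{s,k}\,a_j}$ is constant over the segment, so the entire question reduces to lower-bounding the affine quantity $a_j^{T}\overline{p}_s(t) + b_j$ uniformly in $t\in[t_{s,k-1},t_{s,k}]$.

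Next I would control $a_j^{T}\overline{p}_s(t)$ along the trajectory. Writing $g(p) = a_j^{T}p + b_j$, the map $g$ is affine, so along the reference line $l_{s,k}(\lambda) = (1-\lambda)\overline{p}_{s,k-1} + \lambda\overline{p}_{s,k}$ it equals the convex combination $(1-\lambda)g(\overline{p}_{s,k-1}) + \lambda g(\overline{p}_{s,k})$, whence $g(l_{s,k}(\lambda)) \geq \min\{g(\overline{p}_{s,k-1}),\,g(\overline{p}_{s,k})\}$ for every $\lambda\in[0,1]$. The tracking controller guarantees $\min_{\lambda\in[0,1]}\|\overline{p}_s(t)-l_{s,k}(\lambda)\|_2 \leq \eta$; choosing a minimizing $\lambda^{*}$ and applying the Cauchy--Schwarz inequality to $a_j^{T}(\overline{p}_s(t)-l_{s,k}(\lambda^{*}))$ yields
\[
g(\overline{p}_s(t)) \geq g(l_{s,k}(\lambda^{*})) - \eta\|a_j\|_2 \geq \min\{g(\overline{p}_{s,k-1}),\,g(\overline{p}_{s,k})\} - \eta\|a_j\|_2.
\]

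Finally I would combine the two estimates. Subtracting the constant $\sqrt{a_j^{T}\Sigma_{s,k}\,a_j}$ from the previous bound and invoking the two hypotheses of the lemma, each of which asserts that $g(\overline{p}_{s,k-1}) - \sqrt{a_j^{T}\Sigma_{s,k}\,a_j} \geq (\eta+\epsilon)\|a_j\|_2$ and likewise at $\overline{p}_{s,k}$, gives $\overline{h}_{sj}^{\theta}(\overline{x}_s(t),\Sigma_{s,k}) \geq (\eta+\epsilon)\|a_j\|_2 - \eta\|a_j\|_2 = \epsilon\|a_j\|_2$ for all $t$ in the segment, which is exactly the claim. I expect the main obstacle to be the first step rather than the estimation: one must verify that the minimization defining $\overline{h}_{sj}^{\theta}$ is effectively over position only (the velocity coordinate is unconstrained in $\gamma_j$ and therefore irrelevant), so the ellipsoidal closed form is valid, and that it is precisely the fixed-shape condition \ref{def4con7}) that allows $\sqrt{a_j^{T}\Sigma_{s,k}\,a_j}$ to be carried through the segment as a constant; once these are settled, the remainder is a routine affine-function and Cauchy--Schwarz argument.
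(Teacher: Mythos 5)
Your proposal is correct and follows essentially the same route as the paper's proof: exploit the $\eta$-tube guarantee of the tracking controller together with the fact that an affine function on the segment between waypoints is bounded below by its values at the endpoints, absorb the $\eta$-offset via Cauchy--Schwarz, and use the closed-form minimum of a linear functional over the ellipsoid. Your write-up is somewhat more explicit than the paper's (which leaves the Cauchy--Schwarz step and the linear-over-ellipsoid formula implicit), but the underlying argument is identical.
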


Based on Lemma \ref{lemma:stl-swarm-inner}, we can thus guarantee $\bar{\pi}_2$ using the following constraint:
\begin{align}\label{constraint:swarm-safety}
    \bigwedge _{\substack{s\in\mathcal{S}}}\bigwedge _{\substack{k=1,\ldots,K_{s}}}\bigwedge _{\substack{j = 1, \ldots, M}} \mathtt {safe}_2 (\mathtt{SEG}_{s,k},  j, \epsilon ),
\end{align}
where
\begin{multline*}
    \mathtt {safe}_{2} (\mathtt{SEG}_{s,k},j,  \epsilon)= \\
    (a_{j}^Tp_{s,k}+b_j - \sqrt{a_{j}^{T}\Sigma_{s,k} a_{j}} -(\eta + \epsilon)\|a_j\|_2\geq 0 )\wedge\\
    (a_{j}^T\bar{p}_{s,k+1}+b_j - \sqrt{a_{j}^{T}\Sigma_{s,k} a_{j}} 
 - (\eta +\epsilon)\|a_j\|_2\geq 0). 
\end{multline*}

The sufficiency of constraint \eqref{constraint:swarm-safety} to satisfy $\bar{\pi}_2$ is shown as follows.
\begin{theorem}\label{thm: satisfaction of pi2}
    If constraint \eqref{constraint:swarm-safety} holds, then safety specification $\bar{\pi}_2$ is satisfied.
\end{theorem}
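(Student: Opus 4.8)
The plan is to lift the per-segment guarantee provided by Lemma~\ref{lemma:stl-swarm-inner} into a statement that holds over the entire horizon and across every swarm and predicate index, exploiting the fact that constraint~\eqref{constraint:swarm-safety} is itself a conjunction over all $s$, $k$, and $j$. First I would fix an arbitrary swarm $s\in\mathcal{S}$, an arbitrary obstacle index $j\in\{1,\ldots,M\}$, and an arbitrary segment $k$. Since \eqref{constraint:swarm-safety} holds by assumption, its conjunct $\mathtt{safe}_2(\mathtt{SEG}_{s,k},j,\epsilon)$ holds; its two sub-inequalities are precisely the two hypotheses of Lemma~\ref{lemma:stl-swarm-inner} evaluated at the two endpoints of the segment. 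Applying the lemma then yields $\overline{h}_{sj}^{\theta}(\overline{x}_s,\Sigma_s)\geq\epsilon\|a_j\|_2$ for all $t\in[t_{s,k-1},t_{s,k}]$, i.e. throughout $\mathtt{SEG}_{s,k}$.

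Next I would stitch the per-segment bounds together in time. By the time-progression constraint~\eqref{eq:time seq} together with $t_{s,0}=t_0$ and the final timestamp attaining $t_0+T$, the intervals $[t_{s,k-1},t_{s,k}]$ for $k=1,\ldots,K_s$ cover the full horizon $[t_0,t_0+T]$. Hence for the fixed $s$ and $j$ we have $\overline{h}_{sj}^{\theta}(\overline{x}_s,\Sigma_s)\geq\epsilon\|a_j\|_2\geq 0$ for every $t\in[t_0,t_0+T]$, where the last inequality uses $\epsilon\geq 0$; consequently the predicate $\overline{\theta}_{sj}$ is satisfied at every time instant. Because $s$ and $j$ were arbitrary and $\overline{\pi}_2=\bigwedge_{s}\bigwedge_{j}\overline{\theta}_{sj}$, taking the conjunction over all $s\in\mathcal{S}$ and $j\in\{1,\ldots,M\}$ shows that $G_{[t_0,t_0+T]}\overline{\pi}_2$ holds, which is the claim of Theorem~\ref{thm: satisfaction of pi2}.

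This argument is essentially bookkeeping once Lemma~\ref{lemma:stl-swarm-inner} is in hand, so I do not expect a genuine analytic obstacle. The one place that needs care is the indexing: the lemma is stated for a segment with endpoints $\overline{p}_{s,k-1}$ and $\overline{p}_{s,k}$, whereas $\mathtt{safe}_2$ is written with $\overline{p}_{s,k}$ and $\overline{p}_{s,k+1}$, so I would first reconcile the two endpoint conventions (an off-by-one in the segment label) to be sure the two sub-inequalities of $\mathtt{safe}_2$ really instantiate the two lemma hypotheses for the same segment. I would also confirm that the segments genuinely tile $[t_0,t_0+T]$ rather than merely being nested or stopping short of the endpoint, which is where the monotonicity in~\eqref{eq:time seq} and the boundary conditions on $t_{s,0}$ and $t_{s,K_s}$ are used; given those, the per-segment certificates concatenate into the desired always-operator guarantee.
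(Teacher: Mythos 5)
Your proposal is correct and follows essentially the same route as the paper: instantiate $\mathtt{safe}_2(\mathtt{SEG}_{s,k},j,\epsilon)$ for each $s,k,j$, invoke Lemma~\ref{lemma:stl-swarm-inner} to get $\overline{h}_{sj}^{\theta}\geq\epsilon\|a_j\|_2$ on every segment, and conjoin over segments covering $[t_0,t_0+T]$. The only thing the paper's proof adds that you omit is an explicit appeal to the standing assumption (stated after Definition~\ref{def:path}) that the transition phases at segment boundaries, where $\Sigma_{s,k}$ changes and agents relocate, do not violate $\overline{\pi}_2$; you should cite that assumption to cover the boundary instants, but otherwise your bookkeeping is exactly the paper's.
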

Theorem \ref{thm: satisfaction of pi2} follows from Lemma \ref{lemma:stl-swarm-inner}. The collision avoidance between swarms and obstacles is ensured by requiring each swarm to remain collision-free with all obstacles at every segment.

\subsubsection{Satisfaction of relaxed $\bar{\psi}$}\label{sec:swarm-stl}
In what follows, we derive sufficient conditions to ensure $\overline{\mu}^{s}$ holds for each swarm $s$ during $\mathtt{SEG}_{s,i}$ and hence swarm STL specification $\bar{\psi}$ is satisfied. 

Given a predicate function $h^\mu(x)=F^\mu p+g^\mu $, we denote the $r$-th row of $F^\mu$ and $g^\mu$ as $F_r^\mu$ and $g_r^\mu$, respectively.
We next derive sufficient conditions on $h^\mu(\bar{x}_s)$ for centroids of swarm $s$ during $\mathtt{SEG}_{s,k}$ such that $\bar{h}_s^\mu\geq 0$, thereby $\bar{\mu}^s$ is satisfied during the segment.
\begin{lemma}\label{lemma:stl-swarm-segments}
    For a given $\epsilon$  and the inequalities
    \begin{equation}
    F_{r}^\mu\bar{p}_{s,k}+g_{r}^\mu - \sqrt{F_{r}^\mu\Sigma_{s,k} F_{r}^{\mu^T}} - (\eta +\epsilon)\|F_r^{\mu}\|_2\geq 0
    \end{equation}
    \begin{equation}
    F_{r}^\mu\bar{p}_{s,k+1}+g_{r}^\mu - \sqrt{F_{r}^\mu\Sigma_{s,k} F_{r}^{\mu^T}} - (\eta +\epsilon)\|F_r^{\mu}\|_2\geq 0
    \end{equation}
    hold for all $r=1, \ldots, l$, then $\overline{\mu}^{s}$ is true for all $\overline{x}_{s}\in \Gamma_{s,k}$ during $\mathtt{SEG}_{s,k}$ with $1\leq k\leq K_{s}-1$. 
\end{lemma}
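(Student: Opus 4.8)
The plan is to reduce the two-level minimization implicit in $\overline{h}_s^\mu$ to an affine inequality in the centroid position and then bound that affine expression uniformly over the tube $\Gamma_{s,k}$. First I would exploit the fact that along $\mathtt{SEG}_{s,k}$ the shape matrix is frozen at $\Sigma_{s,k}$ (condition \ref{def4con7}) in Definition \ref{def:path}), so the standard support-function computation for an ellipsoid gives, for each row $r$, that $\min\{F_r^\mu p+g_r^\mu : (p-\overline{p}_s)^T\Sigma_{s,k}^{-1}(p-\overline{p}_s)\le 1\} = F_r^\mu\overline{p}_s + g_r^\mu - \sqrt{F_r^\mu\Sigma_{s,k}F_r^{\mu^T}}$. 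Consequently $\overline{h}_s^\mu(\overline{x}_s,\Sigma_{s,k})\ge 0$, which by construction is exactly what makes $\overline{\mu}^s$ true, holds precisely when this quantity is nonnegative for every $r=1,\ldots,l$; and since the square-root term is a constant on the segment, each such expression is an affine function of $\overline{p}_s$.

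Next I would bound this affine expression uniformly over $\overline{p}_s\in\Gamma_{s,k}$. Writing any such centroid as $\overline{p}_s = l_{s,k}(\lambda)+w$ with $\lambda\in[0,1]$ and $\|w\|_2\le\eta$, I split $F_r^\mu\overline{p}_s$ into the line part $F_r^\mu l_{s,k}(\lambda) = (1-\lambda)F_r^\mu\overline{p}_{s,k} + \lambda F_r^\mu\overline{p}_{s,k+1}$ and the radial part $F_r^\mu w$. The line part is a convex combination of the two endpoint values, hence at least $\min\{F_r^\mu\overline{p}_{s,k},\,F_r^\mu\overline{p}_{s,k+1}\}$, while the radial part is at least $-\eta\|F_r^\mu\|_2$ by Cauchy--Schwarz. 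Substituting the two hypotheses of the lemma, each of which states that the endpoint value exceeds $(\eta+\epsilon)\|F_r^\mu\|_2$, the worst-case lower bound becomes $(\eta+\epsilon)\|F_r^\mu\|_2 - \eta\|F_r^\mu\|_2 = \epsilon\|F_r^\mu\|_2 \ge 0$.

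Combining these estimates row by row yields $\overline{h}_s^\mu(\overline{x}_s,\Sigma_{s,k})\ge 0$ for every $\overline{x}_s\in\Gamma_{s,k}$, so $\overline{\mu}^s$ is true throughout the segment, as claimed. I expect the only real subtlety to be bookkeeping rather than substance: one must keep the segment index consistent with the two waypoints $\overline{p}_{s,k}$ and $\overline{p}_{s,k+1}$ that bound $\Gamma_{s,k}$, and must check that evaluating the ellipsoid support function at the \emph{displaced} centroid (rather than at a nominal waypoint) is exactly what the extra $\eta$ in the margin $(\eta+\epsilon)$ is designed to absorb. Structurally this is the multi-row analogue of Lemma \ref{lemma:stl-swarm-inner}, so the same convex-combination-plus-Cauchy--Schwarz estimate carries over once it is applied to each row $F_r^\mu$ separately.
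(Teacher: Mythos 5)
Your argument is correct and matches the paper's: the paper omits the proof of Lemma~\ref{lemma:stl-swarm-segments}, stating only that it mirrors Lemma~\ref{lemma:stl-swarm-inner}, whose proof is exactly your combination of the ellipsoid support function $\sqrt{F_r^\mu\Sigma_{s,k}F_r^{\mu^T}}$, the convex-combination bound along $l_{s,k}(\lambda)$, and the Cauchy--Schwarz absorption of the $\eta$-displacement, applied row by row. Your version is in fact more explicit than the paper's, and your bookkeeping remark about the endpoint indices ($\bar{p}_{s,k},\bar{p}_{s,k+1}$ here versus $\bar{p}_{s,k-1},\bar{p}_{s,k}$ in Lemma~\ref{lemma:stl-swarm-inner}) correctly flags the only real discrepancy between the two statements.
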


Considering the robustness, the relaxed negation predicates in $\overline{\psi}$ are defined as for a given $\epsilon$  and the inequalities
    \begin{equation}
    F_{r}^\mu\bar{p}_{s,k}+g_{r}^\mu + \sqrt{F_{r}^\mu\Sigma_{s,k} F_{r}^{\mu^T}} + (\eta +\epsilon)\|F_r^{\mu}\|_2\leq 0
    \end{equation}
    \begin{equation}
    F_{r}^\mu\bar{p}_{s,k+1}+g_{r}^\mu + \sqrt{F_{r}^\mu\Sigma_{s,k} F_{r}^{\mu^T}} +(\eta +\epsilon)\|F_r^{\mu}\|_2\leq 0
    \end{equation}
    hold for all $r=1, \ldots, l$, then $\neg\overline{\mu}^{s}$ is true for all $\overline{x}_{s}\in \Gamma_{s,k}$ during $\mathtt{SEG}_{s,k}$ with $1\leq k\leq K_{s}-1$. 

Next, we will formulate constraint $\mathtt{SwarmSTL}_{s,k}^{\overline{\psi}^{s}}$ associated with arbitrary swarm STL specification $\bar{\psi}$ by induction. 
The detailed induction is presented in Appendix \ref{sec:induct constraint}.
When the swarm STL specification involves disjunction between predicates, we use the big-M method and a set of binary variables \cite{griva2008linear}, denoted as $\mathbf{z}$, to convert it into conjunctive forms. 
For example, a swarm STL specification constraint in the disjunctive form $\lor_i (h_i\geq 0)$ is converted to $\land_i (h_i + (1-z_i)Y\geq 0) \land (\sum_i z_i\geq 1)$, where $z_i$ is binary and $Y$ is a sufficiently large number. 
Here, the binary variable $z_i=1$ indicates that  $h_i\geq 0$ holds.

\subsubsection{Intra-Swarm Collision Avoidance}
As shown in Theorem \ref{theorem:centroid-planning-sufficient}, agents within each swarm $s$ should stay at least $\zeta$ away from each other.
Therefore, each swarm $s$ should be sufficiently large to contain at least $\mathcal{N}_s$ agents in the bounding ellipsoid.
We approximate this requirement by representing each swarm using a ball with radius $\zeta$, and constrain the volume of the bounding ellipsoid to be $\xi>1$ times larger than the total volume of all agents.
This constraint is formalized as follows:
\begin{align}
\label{constraint:bounding-box}
    \log(\det(\Sigma_{s,k}))\geq 2\cdot log(\xi |\mathcal{N}_{s}|\zeta^{d})
    ,~\forall s\in\mathcal{S},~k=1,\ldots,K_s.
\end{align}

\subsubsection{Optimization for Planning in Reduced-Dimension State Space}\label{sec:algo}

Combining our discussion in the previous sections, we can reformulate the optimization problem \eqref{problem:optimization-formulation} as follows:

    \begin{align}
\label{eq:optimization-formulation}
\underset{\mathcal{P}, \epsilon}{\max} & \quad \epsilon\\
\mbox{s.t.} 
&\quad \text{Actuation-induced constraint: Eq. \eqref{eq:velocity constraint}}\nonumber\\
&\quad \text{Time progression constraint: Eq. \eqref{eq:time seq}}\nonumber\\
&\quad \text{Satisfaction of $\bar{\pi}_1$: Eq. \eqref{constraint:inter-swarm}}\nonumber\\
&\quad \text{Satisfaction of $\bar{\pi}_2$: Eq. \eqref{constraint:swarm-safety}}\nonumber\\
 &\quad \text{Intra-swarm collision avoidance: Eq. \eqref{constraint:bounding-box}}\nonumber\\
& \quad \text{Swarm STL specification }\mathtt{SwarmSTL}_{s,k}^{\overline{\psi}^{s}} \mbox{ holds}\nonumber
\end{align}

Some constraints in this optimization problem have a convex form instead of a linear form, e.g., in constraint \eqref{constraint:inter-swarm}, making it a mixed-integer convex programming instead of MILP when we use the Big-M method. This makes it computationally challenging to solve.
We address this challenge using an alternating algorithm shown in Algorithm \ref{algo:waypoint}.
Instead of solving for all decision variables in one pass, the algorithm iteratively uses two steps to optimize a subset of decision variables with others being fixed.

At each iteration, the algorithm first solves for the centroid of each swarm $\bar{x}_{s,k}$, time stamps $t_{s,k}$, binary variables $\mathbf{z}$, and $\epsilon$ with fixed bounding ellipsoid $\Sigma_{s,k}$.
This requires solving the mixed-integer linear program \eqref{eq:waypoint optimization} in Appendix \ref{app:optimization}.
Next, we update bounding ellipsoids $\Sigma_{s,k}$ and $\epsilon$ for all $s$ and $k$ with fixed centroids $\bar{x}_{s,k}$, time stamps $t_{s,k}$, and binary variables $\mathbf{z}$.
The bounding ellipsoids can be found by solving a semi-definite program as shown in Eq. \eqref{eq:box optimization}.
The iteration terminates when $\epsilon\geq 0$ holds after the optimization over bounding ellipsoids.
We characterize Algorithm \ref{algo:waypoint} as follows.

\begin{theorem}\label{thm:convergence}
    Algorithm \ref{algo:waypoint} converges within finitely many iterations.
    If Algorithm \ref{algo:waypoint}  terminates with $\epsilon \geq 0$, then the returned path $\mathcal{P}$ is valid and the STL specification is satisfied.
\end{theorem}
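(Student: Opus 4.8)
The plan is to prove the two assertions separately. For finite convergence, the key structural fact I would establish first is that the objective values $\epsilon$ produced across iterations form a monotonically non-decreasing sequence. This follows from a feasibility carry-over argument: the solution obtained after the semidefinite step of one iteration --- the centroids, timestamps, binary variables $\mathbf{z}$, and ellipsoid matrices $\Sigma_{s,k}$ --- remains feasible for the mixed-integer linear step at the start of the next iteration, which only re-optimizes the centroids, timestamps, $\mathbf{z}$, and $\epsilon$ with $\Sigma_{s,k}$ held fixed; hence the new optimal $\epsilon$ cannot decrease, and the symmetric argument for the semidefinite step (which re-optimizes $\Sigma_{s,k}$ and $\epsilon$ with the others fixed) completes the monotonicity claim. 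I would then observe that $\epsilon$ is bounded above, since the attainable margin is limited by the fixed environment geometry, the actuation bound $\chi$, and the horizon $T$. A monotone bounded sequence converges, and finite termination then follows by combining this with the finiteness of the binary domain $\{0,1\}^{|\mathbf{z}|}$: the binary configuration can change only finitely often, after which the scheme reduces to a two-block coordinate ascent on $\epsilon$ over a convex feasible region that reaches a fixed point with no further strict improvement, triggering the stopping rule.

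For correctness, suppose the algorithm terminates with $\epsilon \geq 0$. The returned path $\mathcal{P}$ then satisfies every constraint of program \eqref{eq:optimization-formulation} with a non-negative margin, and I would first argue that $\epsilon \geq 0$ collapses each under-approximation to its exact counterpart. Because the relaxed predicate functions are defined as $\hat{h}_{ij}^{\mu} = h_{ij}^{\mu} - \epsilon$, $\hat{h}_{ij}^{\theta} = h_{ij}^{\theta} - \epsilon\|a_j\|_2$, $\hat{h}^{\mu}(x_i) = F^T p_i + g - \epsilon\|F\|_2$, and analogously for negated predicates, a non-negative relaxed value with $\epsilon \geq 0$ forces the corresponding true predicate value to be non-negative. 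Invoking Theorem \ref{thm: satisfaction of pi1} (through Lemma \ref{lemma:sufficient condition inter-swarm distance}) then yields $\bar{\pi}_1$, Theorem \ref{thm: satisfaction of pi2} (through Lemma \ref{lemma:stl-swarm-inner}) yields $\bar{\pi}_2$, and the inductive $\mathtt{SwarmSTL}$ construction together with Lemma \ref{lemma:stl-swarm-segments} yields $\bar{\psi}$. These three facts establish condition \ref{theorem1:constraint1}) of Theorem \ref{theorem:centroid-planning-sufficient}.

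To close the chain, I would show that the path constraints also encode conditions \ref{theorem1:constraint2}) and \ref{theorem1:constraint3}) of Theorem \ref{theorem:centroid-planning-sufficient}. The intra-swarm volume constraint \eqref{constraint:bounding-box} together with condition \ref{def4con6}) of Definition \ref{def:path} (absorbed into the $\epsilon$-valid requirement) guarantees that every agent of swarm $s$ can remain inside $\overline{\Omega}_{s,k}$ while staying at least $\zeta$ apart, and the low-level controller of Section \ref{sec:low-level} realizes such a trajectory. Since $\mathcal{P}$ being a path supplies the dynamic feasibility of the centroids, the trajectory traversing $\mathcal{P}$ satisfies all three conditions of Theorem \ref{theorem:centroid-planning-sufficient}, so it is valid and, by that theorem, $(\mathbf{x}, t) \models \phi$. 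The main obstacle I anticipate is the finite-convergence claim rather than the correctness chain: monotonicity and boundedness only deliver convergence in the limit, so upgrading this to termination in finitely many iterations genuinely requires the discreteness of $\mathbf{z}$, and one must rule out the scheme stalling at a spurious fixed point with $\epsilon < 0$ while a feasible $\epsilon \geq 0$ exists. I would address this by relating the fixed points of the alternating ascent to the stationarity conditions of \eqref{eq:optimization-formulation} and exploiting convexity of each block subproblem.
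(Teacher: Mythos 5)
Your correctness argument (the second half) follows essentially the same chain as the paper: $\epsilon \geq 0$ makes the under-approximating constraints imply the exact ones, Theorems \ref{thm: satisfaction of pi1} and \ref{thm: satisfaction of pi2} and the $\mathtt{SwarmSTL}$ construction give condition \ref{theorem1:constraint1}) of Theorem \ref{theorem:centroid-planning-sufficient}, the path definition and the low-level control supply conditions \ref{theorem1:constraint2}) and \ref{theorem1:constraint3}), and Theorem \ref{theorem:centroid-planning-sufficient} closes the argument. That half is fine.

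The finite-termination half is where you diverge from the paper, and where your route does not close. The paper's argument is deliberately trivial: Algorithm \ref{algo:waypoint} contains an explicit iteration cap (the \textbf{break} when $\tau \geq \tau_{max}$), so it terminates within $\tau_{max}$ iterations regardless of what the alternating scheme does. You instead try to prove genuine convergence of the coordinate-ascent scheme via monotonicity of $\epsilon$ plus boundedness plus finiteness of the binary domain. Monotone boundedness gives only convergence of the $\epsilon$-values in the limit, as you yourself note, and finiteness of $\{0,1\}^{|\mathbf{z}|}$ does not rescue this: after the binary configuration stabilizes, the continuous two-block ascent can still take infinitely many strictly improving steps without reaching its supremum, so no stopping rule is ever triggered. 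Worse, your proposed repair --- relating fixed points of the alternation to stationarity of \eqref{eq:optimization-formulation} and ruling out spurious fixed points with $\epsilon < 0$ --- would attempt to prove something that is false in general (alternating optimization over a mixed-integer, nonconvex-in-the-joint-variables feasible set can and does stall at infeasible points) and that the theorem does not require: the second sentence of the statement is explicitly conditional on terminating with $\epsilon \geq 0$, so there is no claim that a feasible margin is always found. There is also a small hole in your feasibility carry-over at the first iteration: the user-supplied $\Sigma_{s,k}^{0}$ need not satisfy the volume constraint \eqref{constraint:bounding-box}, which appears in \eqref{eq:box optimization} but not in \eqref{eq:waypoint optimization}, so the previous iterate is not automatically feasible for the semidefinite step. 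The short fix is simply to cite the iteration cap, as the paper does.
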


\begin{center}
  	\begin{algorithm}[!htp]
  		\caption{Alternating algorithm to solve Eq. \eqref{eq:optimization-formulation}}
  		\label{algo:waypoint}
  		\begin{algorithmic}[1]
            \State \textbf{Initialize:} The bounding ellipsoid $\Sigma_{s,k}^0$ for all $s$ and $k$.
            \State Set iteration index $\tau\leftarrow 1$, choose the maximum number of iterations $\tau_{max}$
  			\While {$\epsilon<0$}
            \State Update $\epsilon$, binary variables $\mathbf{z}^\tau$, centroids $\bar{p}_{s,k}^\tau$, and time stamps $t_{s,k}^\tau$ for all $s$ and $k$ by solving Eq. \eqref{eq:waypoint optimization} with fixed $\Sigma_{s,k}^{\tau-1}$ for all $s$ and $k$
            \State Update $\epsilon$ and bounding ellipsoids $\Sigma_{s,k}^\tau$ by solving Eq. \eqref{eq:box optimization} with fixed $\mathbf{z}^\tau$, $\bar{p}_{s,k}^{\tau}$, and $t_{s,k}^\tau$ for all $s$ and $k$
            \If {$\tau\geq \tau_{max}$}
            \State \textbf{break}
            \EndIf
            \EndWhile
            \State \textbf{Return} $t_{s,k}^\tau$, $\bar{p}_{s,k}^\tau$, and $\Sigma_{s,k}^\tau$
  		\end{algorithmic}
  	\end{algorithm}
\end{center}

\subsection{Low-level Control}\label{sec:low-level}

In this section, we derive the control input $u_i$ of each agent $i$ such that agents form swarms which traverse a valid path $\mathcal{P}$ we designed in Section \ref{sec:planning}, and ensure the conditions \ref{theorem1:constraint2} and  \ref{theorem1:constraint3} in Theorem \ref{theorem:centroid-planning-sufficient} are met. Based on the assumption in Section \ref{sec:planning}, these conditions are satisfied during the transition process, and $[\overline{p}_s(t_{s,k})^T,\ \overline{v}_s(t_{s,k})^T]^T=\overline{x}_{s,k}$. In this section, we derive the control input $u_i$ for all $i\in \mathcal{N}_s$, that ensures all agents in swarm $s$ avoid collisions and remain within the bounding ellipsoid throughout the interval $t\in[t_{s,k-1},\ t_{s,k}]$ for all $s\in\mathcal{S}$ and all $k=1, \ldots, K_s$.

The control input of a centroid $\overline{u}_s(t)$ is determined by the trajectory tracking control policy introduced in Section \ref{sec:planning}. In what follows, we will prove the conditions \ref{theorem1:constraint2}) and \ref{theorem1:constraint3}) in Theorem \ref{theorem:centroid-planning-sufficient} are met under the low-level control $u_i(t) = \overline{u}_s(t)$ for all $t\in [t_{s,k-1},\ t_{s,k}]$.

\begin{theorem}
    \label{theorem:low-level-agent-control}
For swarm $s\in \mathcal{S}$, if for all $i\in \mathcal{N}_s$ ,$u_i(t)=\overline{u}_{s}(t)$ holds for all $t\in [t_{s,k-1},\ t_{s,k}]$,  $\forall k = 1, \ldots, K_s$, then we have that 
$\|p_{i}(t)-p_{j}(t)\|_{2} \geq \zeta$ for all $i,j\in \mathcal{N}_s$ with $i\neq j$, and $$(p_{i}(t)-\overline{p}_{s}(t))^{T}\Sigma_{s}(t)^{-1}(p_{i}(t)-\overline{p}_{s}(t)) \leq 1$$ for all $t\in [t_{0},t_{0}+T]$.
\end{theorem}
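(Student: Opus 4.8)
The plan is to exploit the observation that when every agent in a swarm is driven by the centroid's control input, the whole swarm translates rigidly with its centroid, freezing the relative geometry of the swarm within each segment; both conclusions then reduce to conditions that already hold at the start of the segment. Concretely, I would introduce the relative state $\delta_i(t) \triangleq x_i(t) - \overline{x}_s(t)$ and differentiate it: using the agent dynamics \eqref{eq:agent-dynamics}, the centroid dynamics \eqref{eq:dynamics-centroid}, and the hypothesis $u_i(t) = \overline{u}_s(t)$, I obtain $\dot{\delta}_i = A\delta_i + B(u_i - \overline{u}_s) = A\delta_i$. Writing $\delta_i = [\delta_{p,i}^T,\ \delta_{v,i}^T]^T$ and using the block structure of $A$, this decouples into $\dot{\delta}_{v,i} = 0$ and $\dot{\delta}_{p,i} = \delta_{v,i}$, so the relative velocity is constant and the relative position is affine in time across a segment.

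Next I would fix the initial relative velocity at the start of each segment. By the transition-process convention stated after Definition \ref{def:path}, every agent comes to rest at each waypoint, i.e. $v_i(t_{s,k-1}) = 0$ for all $i \in \mathcal{N}_s$ (and the swarm is likewise initialized at rest at $t_0$). Averaging over $\mathcal{N}_s$ gives $\overline{v}_s(t_{s,k-1}) = 0$, hence $\delta_{v,i}(t_{s,k-1}) = 0$. Combined with $\dot{\delta}_{v,i} = 0$, this yields $\delta_{v,i}(t) \equiv 0$ and therefore $\delta_{p,i}(t) = \delta_{p,i}(t_{s,k-1})$ for all $t \in [t_{s,k-1}, t_{s,k}]$; that is, the offset $p_i(t) - \overline{p}_s(t)$ is constant on each segment.

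I would then push this invariance through to the two claims. Since $p_i(t) - p_j(t) = \delta_{p,i}(t) - \delta_{p,j}(t)$ is constant on the segment, $\|p_i(t) - p_j(t)\|_2 = \|p_i(t_{s,k-1}) - p_j(t_{s,k-1})\|_2$; and since $\Sigma_s(t) = \Sigma_{s,k}$ is held fixed by condition \ref{def4con7}) of Definition \ref{def:path}, the quadratic form $(p_i - \overline{p}_s)^T \Sigma_s^{-1}(p_i - \overline{p}_s)$ is likewise constant. Thus both quantities equal their values at $t_{s,k-1}$, and it remains only to verify the two conditions at the start of each segment. I would do this by induction on $k$: the base case at $t_0$ is the standing initial assumption ($\|p_i(t_0) - p_j(t_0)\|_2 \geq \zeta$ and the centroid-containment inequality) stated just before Theorem \ref{theorem:centroid-planning-sufficient}, while the inductive step is supplied by the assumption that the transition process preserves the conditions of Theorem \ref{theorem:centroid-planning-sufficient}, so that at the end of each transition the reconfigured swarm again lies inside its new ellipsoid $\Sigma_{s,k}$ with pairwise separations at least $\zeta$ and zero velocity.

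The hard part will not be the within-segment dynamics, which collapse to the clean identity $\dot{\delta}_i = A\delta_i$, but the bookkeeping across segment boundaries: I must ensure that the containment hypothesis invoked at the start of segment $k$ refers to the correct new ellipsoid $\Sigma_{s,k}$ produced by the transition, and that the zero-velocity reset genuinely holds at every waypoint. Both of these are precisely what the transition-process assumption provides, so the argument ultimately hinges on invoking that assumption correctly rather than on any additional estimate.
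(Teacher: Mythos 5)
Your proposal is correct and follows essentially the same route as the paper's proof: both arguments reduce to the observation that under $u_i=\overline{u}_s$ the relative state obeys $\dot{x}_i-\dot{\overline{x}}_s=A(x_i-\overline{x}_s)$, so with the zero-velocity reset at each waypoint the relative positions (hence pairwise distances and the ellipsoid quadratic form, with $\Sigma_{s,k}$ fixed on the segment) are frozen, and the claims follow from the initial conditions at each segment start supplied by the transition-process assumption. Your explicit induction over segment boundaries is just a slightly more careful packaging of what the paper invokes implicitly.
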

\begin{proof}
    Based on the assumption in Section \ref{sec:planning}, the transition process will not violate the conditions in Theorem \ref{theorem:centroid-planning-sufficient} and the initial condition for each time interval $t\in [t_{s,k-1}, \ t_{s,k}]$ ensures $\|p_i(t_{s,k-1})-p_j(t_{s,k-1})\|\geq \zeta$, $\forall i,j\in \mathcal{N}_s$ with $i\neq j$, and $(p_{i}(t_{s,k-1})-\overline{p}_{s}(t_{s,k-1}))^{T}\Sigma_{s,k}^{-1}(p_{i}(t_{s,k-1})-\overline{p}_{s}(t_{s,k-1})) \leq 1$, for all $i\in \mathcal{N}_s$.

    Next, we consider the satisfaction of conditions during time intervals $t\in [t_{s,k-1}, \ t_{s,k}]$. Define $\Psi_{i,j}(t) = (p_i(t)-p_j(t))^T(p_i(t)-p_j(t))$, we have $\Psi_{i,j}(t_{s,k-1})\geq \zeta^2$ and $\dot{\Psi}_{ij}(t) = 2(v_i(t)-v_j(t))^T(p_i(t)-p_j(t))$. We have that $\dot{x}_i-\dot{x}_j=A(x_j-x_j)$, implying $\dot{v}_i-\dot{v}_j=0$. Given $v_i(t_{s,k-1}) = v_j(s_{s,k-1})=0$, we have $v_i(t)=v_j(t)=\overline{v}_s(t)$ for all $t\in [t_{s,k-1}, \ t_{s,k}]$. Therefore, we have $\dot{\Psi}_{ij}(t)=0$ for all $t\in [t_{s,k-1}, \ t_{s,k}]$, implying $\Psi_{i,j}(t_{s,k-1})\geq \zeta^2$ for all $t\in [t_{s,k-1}, \ t_{s,k}]$.

    Define  $\overline{\Psi}_{i,k}(t)=(p_{i}(t)-\overline{p}_{s}(t))^{T}\Sigma_{s,k}(t)^{-1}(p_{i}(t)-\overline{p}_{s}(t))-1$. In what follows, we prove $\overline{\Psi}_{i,k}(t)\leq 0$ for all $t\in [t_{s,k-1}, \ t_{s,k}]$. We have $\overline{\Psi}_{i,k}(t_{s,k-1})\leq 0$ and $\dot{\overline{\Psi}}_{i,k}(t)=(v_{i}(t)-\overline{v}_{s}(t))^{T}\Sigma_{s,k}(t)^{-1}(p_{i}(t)-\overline{p}_{s}(t))+(p_{i}(t)-\overline{p}_{s}(t))^{T}\Sigma_{s,k}(t)^{-1}(v_{i}(t)-\overline{v}_{s}(t))=0$, implying $\overline{\Psi}_{i,k}(t)\leq 0$ for all $t\in [t_{s,k-1}, \ t_{s,k}]$.

    Therefore, if we have $u_i(t)=\overline{u}_s(t)$ for all $i\in\mathcal{N}_S$ for all $t\in [t_{s,k-1}, \ t_{s,k}]$, $k=1, \ldots, N_s$, we have $\|p_{i}(t)-p_{j}(t)\|_{2} \geq \zeta$ for all $i,j\in \mathcal{N}_s$ with $i\neq j$, and $$(p_{i}(t)-\overline{p}_{s}(t))^{T}\Sigma_{s}(t)^{-1}(p_{i}(t)-\overline{p}_{s}(t)) \leq 1$$ for all $t\in [t_{0},t_{0}+T]$, completing the proof.
\end{proof}

Theorem \ref{theorem:low-level-agent-control} shows that if there exists some control $\bar{u}_s$ such that the centroid of each swarm can transition from $\bar{x}_{s,k-1}$ to $\bar{x}_{s,k}$ and satisfy the safety specification $\pi_2$, then there exist a control input for each agent such that the agents remain within the swarm without colliding with each other.

Therefore, if the waypoints, timestamps, and bounding ellipsoids are chosen by the approach in Section \ref{sec:planning}, and the low-level control policy aforementioned is followed, then the generated $(\textbf{x}, t)$ satisfies all the constraints in Theorem \ref{theorem:centroid-planning-sufficient}, and solves Problem \ref{prob:formulation}.

\section{Simulation}
\label{sec:simulation}
In this section, we evaluate our proposed method on two benchmarks. 
We compare our proposed method with the baseline \cite{sun2022multi} in both single and multiple swarm scenarios. We use Python to implement the algorithm, with the Gurobi optimizer \cite{gurobi} used for solving optimization problems when optimizing over waypoints and timestamps, while the Mosek optimizer \cite{mosek} is employed to solve the optimization problem when optimizing over bounding ellipsoids due to the convenience of its atomic functions, such as the $\text{log}(\text{det}(X))$ function.
All experiments were conducted on MacOS with Apple M3 Pro chip and 32 GB RAM.

The goal of this section includes two folds. We first demonstrate our proposed method by exhibiting that the STL tasks are achieved and the safety requirements are satisfied. Secondly, we evaluate the additional computational overhead introduced by incorporating bounding ellipsoids. We demonstrate that the computational overhead in our approach scales efficiently, whereas the baseline approach becomes intractable as the number of agents increases. 
 
\subsection{Swarm path planning}
\label{sec:benchmarks}
\begin{figure}
    \centering
    \includegraphics[width = 0.25\textwidth]{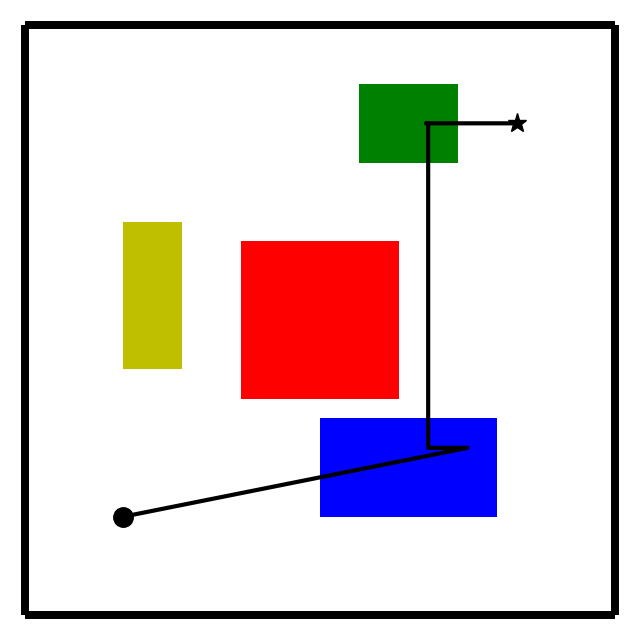}
    \caption{The path planning for the single-swarm scenario. The black lines represent the planned trajectory for the centroid of the swarm. The starting position of the centroid is indicated by a circle, while a star is used to mark the end. The blue and green regions denote goal areas that must be visited, while the red region indicates an unsafe area that all agents must avoid at all times. }
    \label{fig:stlcg-1}
\end{figure}
We consider both single-swarm and multi-swarm scenarios in this paper. 
In the single-swarm scenario, we consider the benchmark stlcg-1 in \cite{leung2023backpropagation} and \cite{sun2022multi}. As shown in Fig. \ref{fig:stlcg-1}, the swarm starts from the bottom-left corner and moves toward the top-right corner. During the process, all the agents in the swarm need to visit the blue ($\mathcal{B}$) regions and green ($\mathcal{G}$) regions and avoid the red ($\mathcal{R}$). 
The STL for this task can be specified as swarm STL $(F_{[0,T]}G_{[0,5]}\mathcal{B})\wedge(F_{[0,T]}G_{[0,5]}\mathcal{G})\wedge(G_{[0,T]}\neg \mathcal{R})$ for all agents with pre-specified time limit $T$. We evaluate the predicate based on Definition \ref{def:swarm predicate}. In this scenario, we set the number of agents in the swarm as $5$, which satisfies the requirements for all tasks. The path planning of the centroid is shown in Fig. \ref{fig:stlcg-1}. The results demonstrate successful task execution, and the collision is avoided.

The multi-swarm scenario is designed based on the benchmark wall-1 from \cite{sun2022multi}. Multiple swarms are tasked with reaching goal regions while avoiding collisions with both walls and the other swarms, especially in the vicinity of the narrow door. We use $\mathcal{W}$ to denote the wall and $\mathcal{G}_1, \ldots, \mathcal{G}_4$ to denote the goals. The corresponding goal regions are marked with green squares from the left top to the right top. The corresponding requirements for the number of agents are $20$, $5$, $25$, and $10$, respectively. The four swarms contain $20$, $5$, $15$, and $30$ agents, respectively. The task can be specified as a swarm STL as 
$\wedge_{j=1}^{4}\vee_{i=1}^4 (G_{[0,T]}\neg \mathcal{W}) \wedge (F_{[0,T]}\mathcal{G}_j)$. The path planning of the centroids for four swarms is shown in Fig. \ref{fig:wall-1-auto}. The centroids reach the goal regions without collision. In this scenario, the swarms choose their goal regions automatically based on the number of agents they have and the requirements for the tasks.
\begin{figure}
    \centering
    \includegraphics[width = 0.4\textwidth]{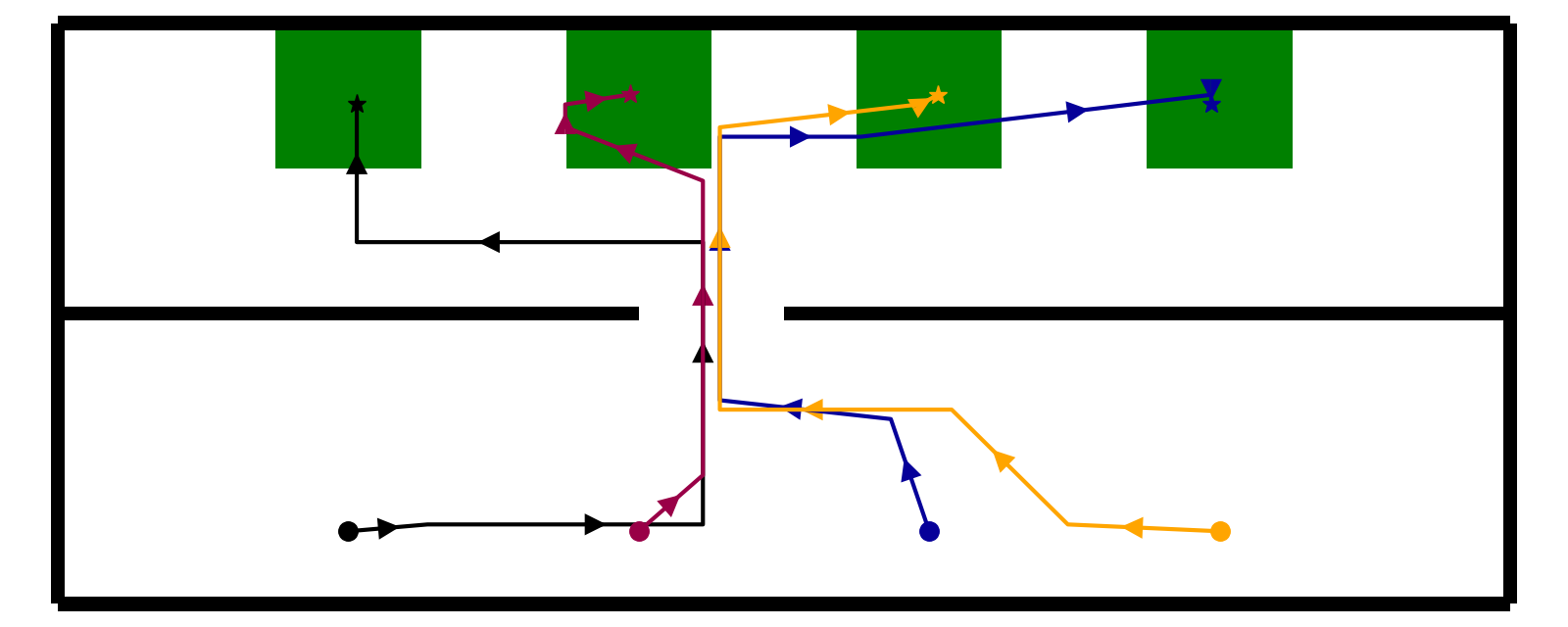}
    \caption{The path planning for the multi-swarm scenario. Goal regions are automatically assigned to swarms based on the task-specific agent requirements and the number of agents within each swarm. The swarms move toward their goal regions without collision. Solid lines denote the path planning for the centroids of swarms. The starting point of the centroid is marked with a circle on the trajectory, and a star is used to mark the end. Arrowheads are added to indicate the direction of motion.}
    \label{fig:wall-1-auto}
\end{figure}

\subsection{Comparison with the baseline}
We compare our proposed method with the baseline approach in \cite{sun2022multi} for both single-swarm and multi-swarm scenarios.
 Considering the limited space of the environment and the large number of agents, we set $\eta = 0.05$ and $\zeta=0.01$ for both our proposed approach and the baseline in both scenarios to satisfy the spacing constraints in both scenarios. 

In the single-swarm scenario, we compare the computational time required for single-swarm planning as described in Section \ref{sec:benchmarks} with that of single-agent planning using the baseline approach. In the baseline approach, a single agent is considered to satisfy the task specification $(F_{[0,T]}G_{[0,5]}\mathcal{B})\wedge(F_{[0,T]}G_{[0,5]}\mathcal{G})\wedge(G_{[0,T]}\neg \mathcal{R})$ within the stlcg-1 environment, which is described in the single-swarm case in Section \ref{sec:benchmarks}. The predicate in the baseline is evaluated based on the classic definition of STL in \cite{donze2010robust}. This comparison highlights the additional time introduced by considering the shapes of the bounding ellipsoids in both waypoints and timestamps optimization and bounding ellipsoids optimization, demonstrating that the algorithm maintains computational efficiency even as the number of agents increases. 
In this scenario, we set the number of segments as $9$, which is consistent with the baseline.  
The total computational time for our approach is $0.52$ seconds, including $0.28$ seconds for waypoints and timestamps optimization and $0.24$ seconds for bounding ellipsoids optimization. The baseline uses $0.06$ seconds for the waypoints and timestamps planning for one agent. Compared with the baseline, our approach uses an extra $0.22$ seconds to plan for waypoints and timestamps for a swarm with $5$ agents.
This demonstrates that our method maintains computational tractability even as the number of agents increases, making it suitable for larger-scale applications.

Next, we compare the computational time of our proposed method with that of the baseline approach in the wall-1 environment, which is described in the multi-swarm case in Section \ref{sec:benchmarks}. 
In the baseline approach, agents are assigned to specific goal regions. To ensure a fair comparison, we assign the goal regions to specific swarms when using our proposed method.
The baseline approach initially computes the waypoints and timestamps for four agents. 
We gradually increase the number of agents by sequentially positioning new agents to the left of each initialized agent and share the same goal region with it. A fixed spacing of $0.2$ meters is maintained between them. 
In our approach, agents assigned to the same goal region are treated as a swarm, and their centroid positions are initialized accordingly. The bounding ellipsoids are initialized with $\Sigma_{s,k}=\begin{bmatrix}
    0.01 &0\\0 & 0.01
\end{bmatrix}$ for all $s\in \mathcal{S}$ and $k=1, \ldots, K_s$. Both the baseline approach and our proposed method use the same number of segments, which is set to $6$. 

\begin{table}[!htp]
\begin{tabular}{|c|ccc|c|}
\hline
\multirow{2}{*}{Agent \#} & \multicolumn{3}{c|}{Our approach (s)}                                                                                                                                                      & \multirow{2}{*}{Baseline (s)} \\ \cline{2-4}
                          & \multicolumn{1}{c|}{\begin{tabular}[c]{@{}c@{}}Waypoints and\\  timestamps\end{tabular}} & \multicolumn{1}{c|}{\begin{tabular}[c]{@{}c@{}}Bounding \\ ellipsiod\end{tabular}} & Total time &                               \\ \hline
4                         & \multicolumn{1}{c|}{39.3}                                                                & \multicolumn{1}{c|}{3.8}                                                           & 43.1       & 0.6                           \\ \hline
5                         & \multicolumn{1}{c|}{40.9}                                                                & \multicolumn{1}{c|}{3.8}                                                           & 44.7       & 26.2                          \\ \hline
6                         & \multicolumn{1}{c|}{34.0}                                                                & \multicolumn{1}{c|}{3.7}                                                           & 37.8       & 140.9                         \\ \hline
7                         & \multicolumn{1}{c|}{37.0}                                                                & \multicolumn{1}{c|}{3.7}                                                           & 40.6       & \textgreater{}900             \\ \hline
8                         & \multicolumn{1}{c|}{31.1}                                                                & \multicolumn{1}{c|}{4.5}                                                           & 35.6       & \textgreater{}900             \\ \hline
\end{tabular}
\caption{Comparison of computational time between our approach and the baseline. Swarms in our approach and agents in the baseline are assigned specific goal regions, respectively. Our
approach is significantly faster than
the baseline when the number of agents is greater than or equal to $6$. 
}
\label{table: computational_time}
\end{table}
As shown in Table \ref{table: computational_time}, when the number of agents increases, the computational time varies within the range of $35.6$ to $44.7$ seconds. The variation remains limited, as we consistently plan for a fixed number of four swarms. In the baseline approach, when the number of agents is fewer than $6$, the computational time is lower than that of our proposed method. However, when the number of agents reaches $6$, the computational time of the baseline approach increases significantly, which is approximately $3.7$ times that of our proposed method. In the cases with $7$ or more agents, the baseline approach fails to compute a solution within $900$ seconds.
These results demonstrate that, compared with the baseline approach, our method maintains a tractable computational time for a fixed number of swarms, even as the number of agents increases.

We observe that the computational time of the waypoints and timestamps optimization in our approach is sensitive to the initialization of the bounding ellipsoids.
Additionally, we note that, in some of the simulation results, the baseline approach results in a lower computation time compared to our proposed method. This is due to a difference in the optimization problems solved by the two methods. In the baseline approach, the feasible margin $\epsilon$ is fixed, while the goal is to minimize the total completion time. In contrast, our approach has the feasible margin $\epsilon$ as an optimization variable. As a comparison, when we fix the feasible margin of our approach at $\epsilon = 0.001$ and minimize the completion time, we obtain a computation time of $0.42$s for four swarms. 

\section{Conclusion}\label{sec: conclusion}
In this paper, we investigated a large-scale multi-agent system subject to an STL specification. To address the scalability challenge, we decomposed the problem into two steps. First, we abstracted agents in a swarm to a centroid with a bounding ellipsoid. The swarm STL was designed to ensure task completion and safety at the swarm level. Next, we proposed a low-level control policy for individual agents within each swarm to guarantee reachability and collision avoidance. In the experiments, we considered both single-swarm and multi-swarm scenarios. The achievement of tasks and the guarantee of safety demonstrated our algorithm. The computational time remained scalable with an increasing number of agents, provided that the number of swarms was fixed.

\IEEEpeerreviewmaketitle

\bibliographystyle{IEEEtran}
\bibliography{MyBib}

\appendix

This section presents the construction of $\mathtt{SwarmSTL}_{s,k}^{\overline{\psi}^{s}}$, the proofs, and the algorithm we proposed in Section \ref{sec:planning}.
\subsection{Construction of $\mathtt{SwarmSTL}_{s,k}^{\overline{\psi}^{s}}$}\label{sec:induct constraint}
The construction of $\mathtt{SwarmSTL}_{s,k}^{\overline{\psi}^{s}}$ depends on the expression of $\overline{\psi}$.
Therefore, we discuss how to construct $\overline{\psi}$ based on the following scenarios.
Let $\bar{\mu}_1$ and $\bar{\mu}_2$ be swarm STL specifications.
\begin{itemize}
    \item If $\overline{\psi}^s=\bar{\mu_1}^s\land\bar{\mu_2}^s$, then $\mathtt{SwarmSTL}_{s,k}^{\overline{\psi}^{s}}=\mathtt{SwarmSTL}_{s,k}^{\overline{\mu}_1^s}\land\mathtt{SwarmSTL}_{s,k}^{\overline{\mu}_2^s}$.
    \item If $\overline{\psi}^s=\bar{\mu_1}^s\lor\bar{\mu_2}^s$, then $\mathtt{SwarmSTL}_{s,k}^{\overline{\psi}^{s}}=\mathtt{SwarmSTL}_{s,k}^{\overline{\mu}_1^s}\lor\mathtt{SwarmSTL}_{s,k}^{\overline{\mu}_2^s}$.
    \item If $\overline{\psi}^s=G_{[a,b]}\bar{\mu}^s$, then 
    \begin{multline*}
          \mathtt{SwarmSTL}_{s,k}^{\overline{\psi}^{s}}
          =\bigwedge _{k'=1}^{K_s}\Big([t_{s,k'-1}, t_{s,k'}] \\
          \cap [t_{s,k-1}+a, t_{s,k}+b] \neq \emptyset \implies \mathtt{SwarmSTL}_{s,k}^{\overline{\mu}^s} \Big).
    \end{multline*}
    \item If $\overline{\psi}^s=F_{[a,b]}\bar{\mu}^s$, then 
    \begin{multline*}
          \mathtt{SwarmSTL}_{s,k}^{\overline{\psi}^{s}}
          =(t_{k} - t_{k-1} \leq b-a) \wedge \\ 
           \bigvee _{k'=1}^{K_s-1} \left([t_{k'-1}, t_{k'}] \cap [t_{k}+a, t_{k-1}+b] \neq \emptyset \right.\\
          \left.\wedge \mathtt{SwarmSTL}_{s,k}^{\overline{\mu}^s} \right).
    \end{multline*}
    \item If $\overline{\psi}^s=\bar{\mu}_1^sU_{[a,b]}\bar{\mu}_2^s$, then 
    \begin{multline*}
          \mathtt{SwarmSTL}_{s,k}^{\overline{\psi}^{s}}
          =(t_{k} - t_{k-1} \leq b-a) \wedge \\ \bigvee _{k'=1}^{K_{s}-1} \Bigg([t_{k'-1}, t_{k'}] \cap [t_{k}+a, t_{k-1}+b] \neq \emptyset \wedge \mathtt{SwarmSTL}_{s,k}^{\overline{\mu}_2^s} \\ \wedge \bigwedge _{k''=1}^{k'} \left([t_{k''-1}, t_{k''}] \cap [t_{k-1}, t_{k}+b] \neq \emptyset \right.\\
          \left.\implies \mathtt{SwarmSTL}_{s,k}^{\overline{\mu}_1^s}\right) \Bigg).
    \end{multline*}
\end{itemize}

\subsection{Proof of Lemma \ref{lemma:sufficient condition inter-swarm distance}}
Based on the results in Section III.B in \cite{sun2022multi}, the condition in \eqref{eq:original inter-swarm constr} can be intuitively interpreted as a sufficient condition to ensure the distance between two centers of two segments is greater than the sum of half the lengths of the two segments plus a margin. Considering the bounding boxes of two swarms, the margin is set as  $(\sqrt{\lambda_{\text{max}}(\Sigma_{s_1,k_1})}+ \sqrt{\lambda_{\text{max}}
(\Sigma_{s_2,k_2})}+ 2\eta + \zeta+\epsilon)$. 
Next, we prove by contradiction. If there exists two points $x_1$ and $x_2$ on two segments $\mathtt{SEG}_{s_1,k_1}$ and $\mathtt{SEG}_{s_2,k_2}$, respectively, such that $\|x_1-x_2\|_2<\overline{\epsilon}\triangleq\sqrt{\lambda_{\text{max}}(\Sigma_{s_1,k_1})}+ \sqrt{\lambda_{\text{max}}
(\Sigma_{s_2,k_2})}+ 2\eta + \zeta+\epsilon$, then we have 
$\frac{1}{2}(\bar{x}_{s_1,k_1} + \bar{x}_{s_1,k_1+1}) - \frac{1}{2}(\bar{x}_{s_2,k_2} - \bar{x}_{s_2,k_2+1})= (\frac{1}{2}(\bar{x}_{s_1,k_1} + \bar{x}_{s_1,k_1+1})-x_1)+(x_1-x_2)+ (x_2 - \frac{1}{2}(\bar{x}_{s_2,k_2} - \bar{x}_{s_2,k_2+1}))$, implying 
\begin{multline*}
    \| \frac{(\bar{x}_{s_1,k_1} + \bar{x}_{s_1,k_1+1})}{2} - \frac{(\bar{x}_{s_2,k_2} - \bar{x}_{s_2,k_2+1})}{2}\|_1\leq \|x_1-x_2\|_1\\
    + \|\frac{(\bar{x}_{s_1,k_1} + \bar{x}_{s_1,k_1+1})}{2}-x_1\|_1 + \|x_2 - \frac{(\bar{x}_{s_2,k_2} - \bar{x}_{s_2,k_2+1})}{2}\|_1\\
    <\sqrt{d}\overline{\epsilon}+\|\frac{1}{2}(\bar{x}_{s_1,k_1} - \bar{x}_{s_1,k_1+1})\|_1 + \|\frac{1}{2}(\bar{x}_{s_2,k_2} - \bar{x}_{s_2,k_2+1})\|_1,
\end{multline*}
which leads to a contradiction. We also have that the distance between the centroid of swarm $s_1$ and $\mathtt{SEG}_{s_1,k_1}$ is less than or equal to $\eta$, and the same holds for $s_2$ and $\mathtt{SEG}_{s_2,k_2}$. The distances between agents with the centroid in swarm $s_1$ and $s_2$ are less than or equal to $\sqrt{\lambda_{\text{max}}(\Sigma_{s_1,k_1})}$ and $ \sqrt{\lambda_{\text{max}}
(\Sigma_{s_2,k_2})}$, respectively. Therefore, the condition in \eqref{eq:original inter-swarm constr} implies that the distance between any two agents form $s_1$ and $s_2$ in $\mathtt{SEG}_{s_2,k_2}$ and $\mathtt{SEG}_{s_2,k_2}$, respectively, is greater than or equal to $\epsilon$, i.e., $\bar{h}_{s_1s_2}^\mu\geq \epsilon$, completing the proof. 
   
\subsection{Proof of Theorem \ref{thm: satisfaction of pi1}}
If $\bigwedge _{\substack{s_{1}\neq s_{2}}}\bigwedge _{\substack{k_1=1,\ldots,K_{s_{1}}\\ k_2=1,\ldots,K_{s_{2}}}} \mathtt {safe}_1 (\mathtt{SEG}_{s_1,k_1},  \mathtt{SEG}_{s_2,k_2}, \allowbreak \epsilon)$ is true, we have $\mathtt {safe}_1 (\mathtt{SEG}_{s_1,k_1},  \mathtt{SEG}_{s_2,k_2}, \allowbreak \epsilon)$ is true for all $(s_1,s_2,k_1,k_2)$ with $s_1\neq s_2$, implying $[t_{s_1,k_1-1}, t_{s_1,k_1}] \cap [t_{s_2,k_2-1}, t_{s_2,k_2}] = \emptyset $ or $\|\frac{1}{2}(\bar{p}_{s_1,k_1} + \bar{p}_{s_1,k_1+1}) - \frac{1}{2}(\bar{p}_{s_2,k_2} - \bar{p}_{s_2,k_2+1})\|_1\geq  \|\frac{1}{2}(\bar{p}_{s_1,k_1} - \bar{p}_{s_1,k_1+1})\|_1 + \|\frac{1}{2}(\bar{p}_{s_2,k_2} - \bar{p}_{s_2,k_2+1})\|_1+ (\sqrt{\lambda_{\text{max}}(\Sigma_{s_1,k_1})}+ \sqrt{\lambda_{\text{max}}(\Sigma_{s_2,k_2})}+ 2\eta + \zeta +\epsilon)\sqrt{d}.$
If $\mathtt{SEG}_{s_1,k_1}$ and $\mathtt{SEG}_{s_2,k_2}$ are disjoint in the time horizon, then swarm $s_1$ and $s_2$ avoid collision, implying the guarantee of $\overline{\pi}_1$.
Otherwise, based on the results in Lemma \ref{lemma:sufficient condition inter-swarm distance}, we have $\overline{h}_{s_1 s_2}^{\mu}\geq \epsilon$ hold.

\subsection{Proof of Lemma \ref{lemma:stl-swarm-inner}}
Based on the trajectory tracking control, for any $\overline{p}_s\in \Gamma_{s,k}$, there exists $\bar{p} = \gamma \bar{p}_{s,k-1} + (1-\gamma)\bar{p}_{s,k}$ with $\gamma\in [0,1]$, such that $\|\overline{p}_s - \overline{p}\|_2\leq \eta$.
Given $a_{j}^T\bar{p}_{s,k}+b_j - \sqrt{a_{j}^{T}\Sigma_{s,k} a_{j}} 
 - (\eta + \epsilon)\|a_j\|_2\geq 0$ and $a_{j}^T\bar{p}_{s,k}+b_j - \sqrt{a_{j}^{T}\Sigma_{s,k} a_{j}} 
 - (\eta +\epsilon)\|a_j\|_2\geq 0$, we have $a_{j}^T\bar{p}+b_j - \sqrt{a_{j}^{T}\Sigma_{s,k} a_{j}} 
 - (\eta +\epsilon)\|a_j\|_2\geq 0$. This implies $a_{j}^T\bar{p}_s+b_j - \sqrt{a_{j}^{T}\Sigma_{s,k} a_{j}} 
 - \epsilon\|a_j\|_2\geq 0$. For any agent $i$ in the swarm, $a_{j}^Tp_i+b_{j} \geq a_j^T\overline{p}_{s,k}+b_j-\sqrt{a_j^T\Sigma_{s,k}a_j}\geq \epsilon\|a_j\|_2$ holds for all $p_i\in \{p\in\mathbb{R}^d: (p-\overline{p}_{s,k})\Sigma_{s,k}^{-1}(p-\overline{p}_{s,k})\leq 1\}$. This implies $\overline{h}_{s,j}^{\theta}(\bar{x}_s,\Sigma_s)\geq \epsilon\|a_j\|_2$ holds during $\mathtt{SEG}_{s,k}$.

 \begin{table*}[t]
\begin{subequations}\label{eq:waypoint optimization}
    \begin{align}
\underset{\substack{\epsilon,\mathbf{z},\bar{p}_{s,k}, t_{s,k},\\s\in\mathcal{S},k=1,\ldots,K_s}}{\max} &\quad \epsilon\\
\mbox{s.t.} 
& \quad \|\overline{p}_{s,k+1}-\overline{p}_{s,k}\|_1\leq \chi \cdot (t_{s,k+1}-t_{s,k}),~\forall s\in \mathcal{S}, ~\forall k\in \{0, 1, \ldots, K_s-1\},\\
&\quad t_0\leq t_{s,1}\leq \ldots\leq t_{s,k}\leq\ldots\leq t_{s,K_s}\leq t_0+T,~\forall s\in\mathcal{S}\\
&\quad ([t_{s_1,k_1-1}^{\tau}, t_{s_1,k_1}^{\tau}] \cap [t_{s_2,k_2-1}^{\tau}, t_{s_2,k_2}^{\tau}] = \emptyset ) 
    \vee
    \Big (\|\frac{\bar{p}_{s_1,k_1} + \bar{p}_{s_1,k_1+1}}{2} - \frac{\bar{p}_{s_2,k_2} - \bar{p}_{s_2,k_2+1}}{2}\|_1\geq \|\frac{\bar{p}_{s_1,k_1} - \bar{p}_{s_1,k_1+1}}{2}\|_1 \nonumber\\
    &+ \|\frac{\bar{p}_{s_2,k_2} - \bar{p}_{s_2,k_2+1}}{2}\|_1 + 
    (\sqrt{\lambda_{\text{max}}(\Sigma_{s_1,k_1})}+ \sqrt{\lambda_{\text{max}}(\Sigma_{s_2,k_2})}+ 2\eta+\zeta+\epsilon)\sqrt{d}\Big), ~\forall s_1\neq s_2, ~k_1=1,\ldots,K_{s_1},~k_2=1,\ldots,K_{s_2}\label{eq:relaxed constr 1}\\
&\quad a_{j}^T\bar{p}_{s,k}+b_j - \sqrt{a_{j}^{T}\Sigma_{s,k}^\tau a_{j}} - 
    (\eta + \epsilon)\|a_j\|_2\geq 0 ,~\forall s\in\mathcal{S},~k=1,\ldots,K_s,~j=1,\ldots,M\\
& \quad\mathtt{SwarmSTL}_{s,k}^{\overline{\psi}^{s}} \mbox{\ is satisfied},~\forall s\in\mathcal{S},k=1,\ldots,K_s
\end{align}
\end{subequations}
\begin{subequations}\label{eq:box optimization}
    \begin{align}
\underset{\substack{\epsilon,\Sigma_{s,k},\\s\in\mathcal{S},k=1,\ldots,K_s}}{\max} &\quad \epsilon\\
\mbox{s.t.} 
    &\quad ([t_{s_1,k_1-1}^{\tau}, t_{s_1,k_1}^{\tau}] \cap [t_{s_2,k_2-1}^{\tau}, t_{s_2,k_2}^{\tau}] = \emptyset ) 
    \vee
    \Big (\|\frac{\bar{p}_{s_1,k_1} + \bar{p}_{s_1,k_1+1}}{2} - \frac{\bar{p}_{s_2,k_2} - \bar{p}_{s_2,k_2+1}}{2}\|_1\geq \|\frac{\bar{p}_{s_1,k_1} - \bar{p}_{s_1,k_1+1}}{2}\|_1 \nonumber\\
    &\quad+ \|\frac{\bar{p}_{s_2,k_2} - \bar{p}_{s_2,k_2+1}}{2}\|_1 + 
    (\sqrt{\lambda_{\text{max}}(\Sigma_{s_1,k_1})}+ \sqrt{\lambda_{\text{max}}(\Sigma_{s_2,k_2})}+ 2\eta+\zeta + \epsilon)\sqrt{d}\Big), ~\forall s_1\neq s_2, ~k_1=1,\ldots,K_{s_1},~k_2=1,\ldots,K_{s_2}\\
    &\quad a_{j}^T\bar{p}_{s,k}+b_j - \sqrt{a_{j}^{T}\Sigma_{s,k}^\tau a_{j}} - 
    (\eta + \epsilon)\|a_j\|_2\geq 0 ,~\forall s\in\mathcal{S},~k=1,\ldots,K_s,~j=1,\ldots,M\\
    &\quad \log(\det(\Sigma_{s,k}))\geq 2\cdot \log(\eta |\mathcal{N}_{s}|b^{n})
    ,~\forall s\in\mathcal{S},~k=1,\ldots,K_s,\\
& \quad\mathtt{SwarmSTL}_{s,k}^{\overline{\psi}^{s}} \mbox{\ is satisfied},~\forall s\in\mathcal{S},k=1,\ldots,K_s
\end{align}
\end{subequations}
\end{table*}
   
\subsection{Proof of Theorem \ref{thm: satisfaction of pi2}}

If \eqref{constraint:swarm-safety} holds, we have $\mathtt {safe}_{2} (\mathtt{SEG}_{s,k},j,  \epsilon)$ holds for all $s$, $k$, and $j$. Based on the results in Lemma \ref{lemma:stl-swarm-inner}, we have $\overline{h}_{s,j}^{\theta}(\bar{x}_s,\Sigma_s)\geq \epsilon\|a_j\|_2$ hold during $\mathtt{SEG}_{s,k}$. 
Next, we consider the satisfaction of $\overline{\pi}_{2}$ during transition at the starting point of $\mathtt{SEG}_{s,k}$. 
Given that $\overline{\pi}_2$ is guaranteed during the process of bounding ellipsoid transitions,
then we have $\overline{h}_{s,j}^{\theta}(\bar{x}_s,\Sigma_s)\geq \epsilon\|a_j\|_2$ is guaranteed.

\subsection{Proof of Lemma \ref{lemma:stl-swarm-segments}}

    The proof of Lemma \ref{lemma:stl-swarm-segments} is similar to that of Lemma \ref{lemma:stl-swarm-inner}, and hence is omitted due to space constraint.

\subsection{Proof of Theorem \ref{thm:convergence}}

The termination guarantee of Algorithm \ref{algo:waypoint} follows when $\tau_{max}$ is finite.
The validity of path $\mathcal{P}$ follows from Definition \ref{def:path}.
We next prove the satisfaction of STL specification.
When a feasible solution is returned by Algorithm \ref{algo:waypoint} with $\epsilon\geq 0$, we have that all constraints of Eq. \eqref{eq:optimization-formulation} are met.
When $\epsilon\geq 0$, we have that constraints \eqref{constraint:inter-swarm} and \eqref{constraint:swarm-safety} hold, implying that $\mathtt {safe}_{1} (\mathtt{SEG}_{s_1,k_2},  \mathtt{SEG}_{s_2,k_2}, \eta )$ and $\mathtt {safe}_{2} (\mathtt{SEG}_{s_1,k_2},  j, \epsilon )$ are satisfied.
Using Theorem \ref{thm: satisfaction of pi1} and \ref{thm: satisfaction of pi2}, we thus have that safety specification $\bar{\pi}_1$ and $\bar{\pi}_2$ are satisfied. 
Furthermore, when $\mathtt{SwarmSTL}_{s,k}^{\overline{\psi}^{s}}$ is satisfied, then the swarm STL specification $\bar{\psi}$ is fulfilled.
Given our STL specification $\phi=\psi\land\pi$ and Theorem \ref{theorem:centroid-planning-sufficient}, we thus have that the STL specification is satisfied.

\subsection{Algorithm Description}\label{app:optimization}
We introduce the slack variable $\epsilon$ as described in Section \ref{sec:planning} when solving optimization problem \eqref{eq:optimization-formulation}.
This yields the following optimization problem:
    \begin{align} \underset{\substack{\epsilon,t_{s,k},\bar{x}_{s,k},\Sigma_{s,k},\\s\in\mathcal{S},k=1,\ldots,K_s}}{\max} &\quad \epsilon \label{eq: relax app optimization}\\
\mbox{s.t.} 
&\quad \text{Actuation-induced constraint: Eq. \eqref{eq:velocity constraint}}\nonumber\\
&\quad \text{Time progressing constraint: Eq. \eqref{eq:time seq}}\nonumber\\
&\quad \text{Satisfaction of $\bar{\pi}_1$: Eq. \eqref{constraint:inter-swarm}}\nonumber\\
&\quad \text{Satisfaction of $\bar{\pi}_2$: Eq. \eqref{constraint:swarm-safety}}\nonumber\\
 &\quad \text{Intra-swarm collision avoidance: Eq. \eqref{constraint:bounding-box}}\nonumber\\
& \quad \text{Swarm STL specification }\mathtt{SwarmSTL}_{s,k}^{\overline{\psi}^{s}} \mbox{ holds}\nonumber
\end{align}

We summarize the optimization problem used to update $\epsilon$, $\mathbf{z}$, $\bar{x}_{s,k}^\tau$, and $t_{s,k}^{\tau}$ for all $s$ and $k$ in Eq. \eqref{eq:waypoint optimization}. We further relax the constraint \eqref{eq:velocity constraint} as 
    $\|\overline{x}_{s,k+1}-\overline{x}_{s,k}\|_1\leq \chi \cdot (t_{s,k+1}-t_{s,k}),~\forall s\in \mathcal{S}, k=0,1,\ldots, K_s-1$.
Note that the relaxed constraint is a sufficient condition to constraint \eqref{eq:velocity constraint} and can be transfer to a linear form of $\bar{x}_{s,k}$ and $\bar{x}_{s,k+1}$.

Therefore, the optimization problem \eqref{eq:waypoint optimization} is a mixed integer linear program.

We present the optimization problem used to update bounding ellipsoid $\Sigma_{s,k}$ for all $s$ and $k$ in Eq. \eqref{eq:box optimization}.
To solve this optimization problem, we rewrite $\Sigma_{s,k}$ as $\sigma_{s,k}\sigma_{s,k}^T$ and convert the decision variable as $\sigma_{s,k}$. 
This allows us to represent $\sqrt{\lambda_{\max}(\Sigma_{s,k})}$ as $\|\sigma_{s,k}\|_2$ and represent $\sqrt{a_{j}^{T}\Sigma_{s,k} a_{j}}$ as $\|a_j^T\sigma_{s,k}\|_2$.
We can thus observe that the optimization problem is a semi-definite program, and thus can be solved efficiently.

\end{document}